\documentclass{article}
\usepackage[utf8]{inputenc}
\usepackage{fullpage}
\usepackage{array}
\usepackage{makecell}

\usepackage{xcolor,colortbl}

\usepackage[
  colorlinks,
  linkcolor = blue,
  citecolor = blue,
  urlcolor = blue]{hyperref}

\usepackage{graphicx}
\newcommand{\poly}{\textnormal{poly}}
\usepackage{amsthm, amsmath, mathtools, amsfonts, enumerate}
\usepackage[scr=rsfs,scrscaled=1]{mathalpha}
\usepackage{pythonhighlight}
\usepackage[T1]{fontenc}
\usepackage{tikz}
\usetikzlibrary{quantikz}

\theoremstyle{plain}
\newtheorem{theorem}{Theorem}[section]

\newtheorem{lemma}[theorem]{Lemma}
\newtheorem{corollary}[theorem]{Corollary}

\newtheorem{definition}[theorem]{Definition}
\newtheorem{prop}[theorem]{Proposition}

\newtheorem{remark}[theorem]{Remark}
\newtheorem{proposition}[theorem]{Proposition}

\newtheorem*{theorem*}{Theorem}
\newtheorem*{problem*}{Problem}

\theoremstyle{definition}

\usepackage{cleveref}
\crefname{theorem}{Theorem}{Theorems}
\crefname{lemma}{Lemma}{Lemmas}
\crefname{proposition}{Proposition}{Propositions}
\crefname{definition}{Definition}{Definitions}
\crefname{corollary}{Corollary}{Corollaries}
\crefname{example}{Example}{Examples}
\crefname{section}{Section}{Sections}
\crefname{appendix}{Appendix}{Appendices}
\crefname{table}{Table}{Tables}
\crefname{conjecture}{Conjecture}{Conjectures}

% \DeclarePairedDelimiter\bra{\langle}{\rvert}
% \DeclarePairedDelimiter\ket{\lvert}{\rangle}
% \DeclarePairedDelimiterX\braket[2]{\langle}{\rangle}{#1 \delimsize\vert #2}

% Comments
\usepackage{xcolor}

\usepackage{dsfont} \usepackage{mathrsfs} \usepackage{microtype}

\usepackage{thmtools,thm-restate}

\renewcommand{\P}[2][]{{\textnormal{Pr}_{#1}}{\left[#2\right]}}
\newcommand{\E}[2][]{{\textnormal{E}_{#1}}{\left[#2\right]}}

\newcommand{\indic}[1]{\delta\sof[\big]{#1}}

\newcommand{\R}{\mathbb{R}}
\newcommand{\C}{\mathbb{C}}
\newcommand{\Z}{\mathbb{Z}}

\newcommand{\eqdef}{:=}

\newcommand{\one}{\mathds{1}}

\newcommand{\Stab}{\textnormal{Stab}}
\newcommand{\Cliff}{\textnormal{Cliff}}

\usepackage{mathtools}
\mathtoolsset{centercolon}
\DeclarePairedDelimiter{\set}{\lbrace}{\rbrace}
\DeclarePairedDelimiter{\abs}{\lvert}{\rvert}
\DeclarePairedDelimiter{\card}{\lvert}{\rvert} \DeclarePairedDelimiter{\norm}{\lVert}{\rVert}

\DeclarePairedDelimiter{\pr}{\lparen}{\rparen} \DeclarePairedDelimiter{\sof}{\lbrack}{\rbrack}

\usepackage{authblk}

%%%%%%%%%
\newcommand{\F}{\mathbb{F}_2}

\title{Improved bounds for testing low stabilizer complexity states}
\author{Saeed Mehraban \footnote{Tufts CS,  Medford MA, Saeed.Mehraban@tufts.edu}\quad Mehrdad Tahmasbi\footnote{UIUC CS, Champaign IL, mehrdad@illinois.edu}}
\date{\today}

\begin{document}

\maketitle
\begin{abstract}
    Stabilizer states are fundamental families of quantum states with crucial applications such as error correction, quantum computation, and simulation of quantum circuits. In this paper, we study the problem of testing how close or far a quantum state is to a stabilizer state. We make two contributions: First, we improve the state-of-the-art parameters for the tolerant testing of stabilizer states. In particular, we show that there is an efficient quantum primitive to distinguish if the maximum fidelity of a quantum state with a stabilizer state is $\geq \epsilon_1$ or $\leq \epsilon_2$, given one of them is the case, provided that $\epsilon_2 \leq \epsilon_1^{O(1)}$. This result improves the parameters in the previous work \cite{arunachalam2024tolerant} which assumed $\epsilon_2 \leq e^{- 1/\epsilon^{O(1)}_1}$ \cite{arunachalam2024tolerant}. Our proof technique extends the toolsets developed in \cite{arunachalam2024tolerant} by applying a random Clifford map which balances the characteristic function of a quantum state, enabling the use of standard proof techniques from higher-order Fourier analysis for Boolean functions \cite{hatami2019higher, Samorodnitsky2007}, where improved testing bounds are available.
    Second, we study the problem of testing low stabilizer rank states. We show that if for an infinite family of quantum states stabilizer rank is lower than a constant independent of system size, then stabilizer fidelity is lower bounded by an absolute constant. Using a result of \cite{grewal2022low}, one of the implications of this result is that low approximate stabilizer rank states are not pseudo-random. 

    At the same time our work was completed and posted on arXiv, two other groups \cite{bao2024tolerant, arunachalam2024note} independently achieved similar exponential to polynomial improvements for tolerant testing, each using a different approach.
    
\end{abstract}

\tableofcontents

\section{Introduction}

Standard results in quantum state tomography \cite{haah2016sample, o2016efficient} indicate that to fully determine the description of a quantum state, the number of copies needed scales with the dimension of the Hilbert space. This dimension grows exponentially with the number of parts (e.g., qubits) in the system, making full tomography inefficient for many applications. However, if we are interested in testing specific properties of quantum states,  far fewer copies may be sufficient. For instance, testing if the rank of a density matrix is a given constant requires only a constant number of copies \cite{o2015quantum}. Testing specific features of a mathematical problem is the subject of the field of ``property testing'' which has had a fundamental success in theoretical computer science by designing fast algorithms that test vast amounts of data in various settings \cite{blum1990self, goldreich1998property, fischer2004art, ron2008property}. More recently, increasing interest has been dedicated to designing property testing protocols for problems in quantum information science (See \cite{montanaro2013survey} for a review). 

In this paper, we study the problem of testing quantum states with low stabilizer complexity. Stabilizer states are fundamental families of quantum states that appear in many applications such as simulating quantum computers on classical computers \cite{gottesman1998heisenberg}, quantum error correction \cite{gottesman1997stabilizer}, or defining models of quantum computation \cite{briegel2009measurement}. 
Determining whether a quantum state is close to or far from a stabilizer state can help us understand whether the state inherits key features of stabilizer states. As a result, it is important to have an efficient procedure to carry out this task. 

There are multiple ways to measure how close or far a quantum state is from the stabilizer ensemble. For example, \emph{stabilizer fidelity} is defined as the maximum overlap between a stabilizer state and the given quantum state.
\emph{Approximate stabilizer rank} is the minimum number of stabilizer states in any approximate decomposition of the given state into stabilizer states.
These different definitions may not always be comparable, as each can be more suitable for specific applications. For instance, stabilizer rank is closely tied to the computational complexity of simulating quantum systems, while stabilizer fidelity is more relevant to applications like distinguishing between quantum states.

An important question is: How many samples are needed to test the closeness or distance of a quantum state to the set of stabilizer states based on one of these metrics? The formulation and solution of this problem crucially depend on the metric we choose. 
Recent results have made important progress in addressing this problem. In \cite{montanaro2017learning} it was shown that $O(n)$ copies of an unknown stabilizer state over $n$ qubits are sufficient to identify the specific stabilizer state. Crucially this result is based on a measurement primitive called bell sampling which is the basis for several subsequent works. \cite{gross2021schur} showed that ``six'' copies of a quantum state are sufficient to test if that quantum state is exactly a stabilizer state or it is far from the stabilizer family in terms of fidelity. Their measurement protocol is a generalization of Bell sampling, which they call Bell ``difference'' sampling. They interpreted this primitive through a variant of Schur–-Weyl duality for the Clifford group, which they developed as their main technical development. \cite{grewal2022low} showed that using $O(k^{12})$ copies of a quantum state we can distinguish either if the quantum state has a fidelity of $\geq 1/k$ with a stabilizer state or if it is a Haar random state (which has an exponentially small overlap with any of the stabilizer states), thus demonstrating that high stabilizer fidelity states are not pseudorandom. In a subsequent work the same authors \cite{Grewal2024improved} showed that we can find a stabilizer with overlap $\epsilon - \tau$ for any state of stabilizer fidelity $\epsilon$ given polynomial copies of the state but exponential post processing time. This result were later improved in \cite{chen2024stabilizerbootstrappingrecipeefficient} to have polynomial post-processing time.   \cite{Grewal2024improved} also considered ``tolerant testing of stabilizer states'', i.e., the problem of deciding if stabilizer fidelity of a quantum state is $\geq \epsilon_1$ or $\leq \epsilon_2$. They showed that this task is possible given $\epsilon_2 \leq \frac{4\epsilon_1^6 - 1}{3}$ with $O(\poly(1/\epsilon_1))$ many samples using Bell difference sampling; in particular this result has the limitation that it assumes $\epsilon_1 \geq \sqrt[3]{1/2}$.   \cite{arunachalam2024tolerant} made further progress on tolerant stabilizer testing by showed that assuming a conjecture in additive combinatorics, we can do tolerant testing given $\epsilon_2 \leq 2^{-\text{poly}(1/\epsilon_1)}$ with $O(\poly(1/\epsilon_1))$ many samples. For phase states, they prove this result assuming $\epsilon_2 = \epsilon_1^{O(1)}$. The complexity of their procedure is poly$(n/\epsilon_1)$, where $n$ is the number of qubits. 

A major technical tool used in \cite{arunachalam2024tolerant} is the Gowers norm, which has been applied in additive combinatorics and higher-order Fourier analysis \cite{hatami2019higher, Gowers_2001, tao2012higher} to characterize phase structure in complex-valued functions defined on groups.  In particular for a function $f : \F^n \rightarrow \C$ the Gowers 3-norm defined as 
\begin{multline}
        \norm{f}_{U^3}^8 = \frac{1}{16^n} \sum_{x, h_1, h_2, h_3\in\F^n} f(x) \overline{f(x+h_1) f(x+h_2) f(x+h_3)} \\
        \times f(x+h_1+h_2)f(x+h_1 + h_3) f(x+h_2+h_3) \overline{f(x+h_1+h_2+h_3)}.
\end{multline}
More detailed definition is provided in \cref{sec:measures}. We can show that for $f : \F^n \rightarrow \{+1,-1\}$ if $\|f\|_{U^3}$ is large, then $f$ has a high correlation with a quadratic phase and if it is small it is far (See Theorem 5.3. in \cite{hatami2019higher}). This definition is particularly useful because stabilizer states have quadratic phase structure \cite{VanDenNest_2010}. 

In this work, we present the following advancements in this line of research: (1) We improve \cite{arunachalam2024tolerant} by showing that tolerant testing of quantum states is possible assuming $\epsilon_2 = \epsilon_1^{O(1)}$ in time $\text{poly} (n/\epsilon_1)$ without relying on additional conjectures. (2) We show that for an infinite family of quantum states with an approximate stabilizer rank bounded from above by a constant, the stabilizer rank of any state from that family is bounded below by a constant. As an implication of this result, we show that (approximate) low stabilizer rank states are not pseudorandom, generalizing \cite{grewal2022low}.

The organization of this paper is as follows. In \cref{sec:intro-results} we present the main results of this paper. In \cref{sec:intro-proof-ideas} and \cref{sec:future} we describe proof ideas and future directions. In \cref{sec:preliminaries} we discuss preliminaries and notations. In \cref{sec:proof1} we discuss the proof of the improved bounds for tolerant testing. \cref{sec:proof2} is dedicated to the proof of the second main result about the relationship between stabilizer fidelity and stabilizer rank. The remainder of this paper consists of appendices.

%property testing is important: Classically --> quantum property testing, 
%Quantum, learning full description requires exponentially many samples. But for testing special properties we may need much less number of copies.
%For instance testing rank for constant rank can be done using constant number of copies
%In this paper we study the the problem of testing states with low stabilizer complexity
%stabilizer states are important
%Paper by Ashley montanaor \cite{montanaro2017learnin}: Learning stabilizer states by bell sampling
%e.g. Sepehr, Bootstrapping stabilizer states, AD24
%Recently AD24 studied tolerant testing
%The main idea behind many of these results is bell testing ... 
%in this work we show result 1 and 2
%For result 1
%For result 2 we also develop the diagram 

\subsection{Main results}
\label{sec:intro-results}
Our first result is improving the parameters for tolerant testing of stabilizer states in \cite{arunachalam2024tolerant}. For a quantum state $\ket{\phi} = \frac{1}{\sqrt{N}} \sum_x g(x) \ket{x}$ we define the Gowers 3-norm of the quantum state to be $\norm{\ket{\phi}}_{U^3} := \norm{g}_{U^3}$. 
\begin{theorem}
\label{th:main1} [Relating Gowers 3-norm to fidelity]
    Let $\ket{\phi}$ be a quantum state with $\norm{\ket{\phi}}_{U^3}^{8} \geq \gamma $. Then, there exists a stabilizer state $\ket{s}$ such that $\abs{\braket{\phi}{s}} \geq \frac{\gamma^{C_2}}{C_1}$ where $C_1, C_2 > 0$ are two absolute constants 
\end{theorem}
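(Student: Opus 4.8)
The obstruction to a one-line reduction is that $g:\F^n\to\C$ is an essentially arbitrary function, normalized only by $\E[x]{\abs{g(x)}^2}=1$, whereas the \emph{polynomial}-loss inverse theorem for the $U^3$ norm (\cite{Samorodnitsky2007}; Theorem~5.3 of \cite{hatami2019higher}) is available for \emph{bounded} — in particular phase-valued — functions. This is precisely why \cite{arunachalam2024tolerant} obtained a polynomial (rather than exponential, and unconditional) relationship only for phase states. The plan is to first use a Clifford unitary to reduce the general case to a ``balanced'' one, and only then invoke Boolean higher-order Fourier analysis.

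\emph{Step 1 (Clifford balancing).} Since the Clifford group acts transitively on stabilizer states, $\max_{\ket{s}\in\Stab}\abs{\braket{\phi}{s}}$ is unchanged under $\ket{\phi}\mapsto C\ket{\phi}$ for any Clifford $C$; so it suffices to exhibit \emph{some} $C$ for which $C\ket{\phi}$ has good stabilizer fidelity. Write $g_C$ for the amplitude function of $C\ket{\phi}$. I would pass through the characteristic (Weyl) function $c_\phi(a)=2^{-n}\bra{\phi}W_a\ket{\phi}$ of $\rho=\ket{\phi}\bra{\phi}$: (i) express $\norm{g}_{U^3}^{8}$ exactly in terms of $c_\phi$ and its symplectic Fourier transform — the statistics seen by Bell difference sampling, as in \cite{gross2021schur,arunachalam2024tolerant}; (ii) note that under a Clifford these transform covariantly, $c_{C\phi}(a)=\pm\,c_\phi(\sigma_C^{-1}a)$ for a symplectic $\sigma_C$; and (iii) use that the Clifford group is a unitary $3$-design, so a random $\sigma_C$ ``spreads out'' (balances) the mass of $c_\phi$. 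The upshot I would aim for is: with constant probability over $C$, the amplitude function $g_C$ is balanced (essentially unit-modulus off a negligible fraction of inputs and up to polynomial factors), while $\norm{g_C}_{U^3}^{8}\ge\poly(\gamma)$. Given balancing, replacing $g_C$ by its normalized phase $\tilde g:=g_C/\abs{g_C}$ — a $1$-bounded, essentially unit-modulus function — costs only a polynomial factor: $\norm{\tilde g}_{U^3}^{8}\ge\poly(\gamma)$.

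\emph{Step 2 (inverse $U^3$ theorem, then back to a stabilizer state).} Apply the polynomial-loss inverse theorem for $\norm{\cdot}_{U^3}$ over $\F^n$ to $\tilde g$: it returns a quadratic phase, i.e.\ a classical phase function of degree at most $2$, $\chi(x)=i^{\ell(x)}(-1)^{Q(x)}$ (possibly supported on an affine subspace), with $\ell$ linear and $Q$ quadratic over $\F$, such that $\abs{\E[x]{\overline{\chi(x)}\,\tilde g(x)}}\ge\poly(\gamma)$. By \cite{VanDenNest_2010}, such a $\chi$ is (proportional to) the amplitude function of a stabilizer state $\ket{s}$; undoing the phase replacement of Step~1 gives $\abs{\braket{C\phi}{s}}=\abs{\E[x]{\overline{\chi(x)}\,g_C(x)}}\ge\poly(\gamma)$, and Clifford-invariance yields $\abs{\braket{\phi}{C^\dagger s}}=\abs{\braket{C\phi}{s}}\ge\poly(\gamma)$. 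Composing the polynomial losses from balancing, the phase replacement, and the inverse $U^3$ theorem produces absolute constants $C_1,C_2>0$ with $\max_{\ket{s}}\abs{\braket{\phi}{s}}\ge\gamma^{C_2}/C_1$.

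\emph{Main obstacle.} The heart of the matter is Step~1. The $U^3$ norm is basis-dependent, so it is \emph{not} automatic that a Clifford — let alone a balancing one — preserves it up to polynomial factors; making this quantitative requires (a) an exact identity writing $\norm{g}_{U^3}^{8}$ through the Clifford-covariant Bell-difference statistics of $\rho$, and (b) a concentration/anti-concentration estimate for a uniformly random Clifford, relying only on its $3$-design property, that simultaneously balances the characteristic function and shows the Gowers mass is not destroyed. A secondary subtlety is handling the $\Z_4$-valued ($i^{\ell}$) and affine-coset features of the quadratic phase returned by the $\F^n$ inverse theorem so that the object one reads off is genuinely a stabilizer state — which is exactly what \cite{VanDenNest_2010} guarantees.
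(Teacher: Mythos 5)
Your high-level strategy --- exploit Clifford invariance of stabilizer fidelity and the Gowers norm, apply a random Clifford to ``balance'' the state, then run the polynomial-loss inverse $U^3$ machinery --- is exactly the paper's key idea. However, your Step~1 has a genuine gap in what the balancing actually delivers. A random Clifford does \emph{not} make $g_C$ essentially unit-modulus off a negligible set: by the (2-)design calculation the only thing one gets is a bound on the fourth moment, $\E[x]{\abs{g_C(x)}^4}=O(1)$, and this is tight in the sense that amplitudes of size $\omega(1)$ (and zero amplitudes) genuinely occur --- e.g.\ $C\ket{0^n}$ is a random stabilizer state supported on an affine subspace with amplitudes $\sqrt{N/\abs{A}}>1$ there and $0$ elsewhere, and for generic states the moduli fluctuate like those of a Haar-random vector. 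Consequently your reduction to a phase function via $\tilde g:=g_C/\abs{g_C}$ is not justified: $\tilde g$ is not even defined where $g_C$ vanishes, there is no general inequality relating $\norm{g_C/\abs{g_C}}_{U^3}$ to $\norm{g_C}_{U^3}$ up to polynomial factors, and even if the inverse theorem gave a quadratic phase $\chi$ correlating with $\tilde g$, the correlation $\E[x]{\overline{\chi(x)}\,g_C(x)}=\E[x]{\overline{\chi(x)}\abs{g_C(x)}\tilde g(x)}$ need not be large --- the magnitude profile $\abs{g_C}$ can destroy it. So you cannot simply invoke the off-the-shelf inverse theorem for $1$-bounded functions.

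What the paper does instead is keep the unbalanced, non-phase function and redo the inverse-theorem argument at the level of the characteristic function $f(y,\alpha)=\abs{\bra{\phi}X^yZ^\alpha\ket{\phi}}^2$: the fourth-moment bound translates (via a Fourier/uncertainty lemma) into $\sum_\alpha f(y,\alpha)\leq 3$ for every $y$, which replaces the identity $\sum_\alpha f(y,\alpha)=1$ available for phase states; the probabilistic construction of the almost-linear map $\zeta$ (sampling $\zeta(y)=\alpha$ with probability $f(y,\alpha)/\sum_{\alpha'}f(y,\alpha')$), the Balog--Szemer\'edi--Gowers plus Marton-conjecture linearization, and the symmetrization/diagonal steps are then carried out with only this $O(1)$ upper bound, losing polynomial factors throughout. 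Two further ingredients you are missing: the paper first reduces to \emph{real} amplitude functions (splitting $g=g_R+ig_I$ and using the triangle inequality for $\norm{\cdot}_{U^3}$), both so that the \emph{real} Clifford group (an orthogonal 2-design) can be used and so that $f(y,\alpha)=0$ whenever $\langle y,\alpha\rangle=1$, a fact needed in the symmetrization and zero-diagonal steps; and the design property needed is only a 2-design for the fourth-moment bound, not a 3-design as you suggest. To repair your write-up you would either have to prove a ``phase extraction preserves $U^3$ up to $\poly$'' lemma (which is false as stated) or, as the paper does, rework the inverse theorem for functions with bounded fourth moment.
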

\noindent (See Remark~\ref{rem:constants} for the values of $C_1$ and $C_2$). 
Using Bell difference sampling introduced in \cite{gross2021schur} (also used in \cite{arunachalam2024tolerant}; see Lemma 4.8 therein) we can show that for a quantum state $\ket{\phi} \in (\mathbb{C}^2)^{\otimes n}$ we can estimate a quantity $R$ such that $\norm{\ket{\phi}}^{16}_{U^3} \leq R \leq \norm{\ket{\phi}}^8_{U^3}$ to within $\delta$ additive error using $O(1/\delta^2)$ copies of $\ket{\phi}$ and a (uniform) circuit of size $O(n/\delta^2)$. See also \cref{sec:bell-sampling} for an overview.
Combining this result with Theorem~\ref{th:main1}, we can show the following result. Let $Stab_n$ be the set of stabilizer states over $n$ qubits, and stabilizer fidelity be $F(\phi) = \max_{\ket{s} \in Stab_n} |\braket{s}{\phi}|^2$. 
\begin{corollary} [Improved bounds on tolerant testing]
Let $\ket{\phi} \in (\mathbb{C}^2)^{\otimes n}$ with the promise that either $F(\phi) \geq \epsilon_1$ or $\leq \epsilon_2$. Using $\poly(\epsilon_1)$ copies of a quantum state $\ket{\phi} \in (\mathbb{C}^2)^{\otimes n}$ and a circuit of size $n \cdot \poly(1/\epsilon_1)$ we can distinguish between the two cases with probability of error $\leq 1/3$ provided that $\epsilon_2 \leq \epsilon_1^C$ for a sufficiently large absolute constant $C > 0$.     
\end{corollary}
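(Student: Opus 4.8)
The plan is to turn \cref{th:main1} together with the Bell-difference-sampling estimator recalled just above into a decision procedure: run that estimator on $O(1/\delta^{2})$ copies of $\ket{\phi}$ to obtain a value $\widehat R$ that, with probability at least $2/3$, satisfies $\abs{\widehat R - R} \le \delta$ for some $R$ with $\norm{\ket{\phi}}_{U^3}^{16} \le R \le \norm{\ket{\phi}}_{U^3}^{8}$, and then threshold $\widehat R$. It therefore suffices to (i) lower bound $R$ when $F(\phi) \ge \epsilon_1$, (ii) upper bound $R$ when $F(\phi) \le \epsilon_2$, and (iii) check that under the promise $\epsilon_2 \le \epsilon_1^{C}$ with $C$ a large absolute constant these two bounds leave a $\poly(\epsilon_1)$ gap, so that a single estimate to accuracy $\delta = \Theta(\epsilon_1^{8})$ already classifies $\ket{\phi}$ correctly with probability at least $2/3$ (meeting the required error bound, which can be improved by a median of a few runs); this costs $O(\epsilon_1^{-16}) = \poly(1/\epsilon_1)$ copies and a circuit of size $n\cdot\poly(1/\epsilon_1)$.

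For (ii) I use the contrapositive of \cref{th:main1}: if $\norm{\ket{\phi}}_{U^3}^{8} \ge \gamma$ then $F(\phi) = \max_{\ket{s}\in Stab_n}\abs{\braket{s}{\phi}}^{2} \ge \gamma^{2C_2}/C_1^{2}$, so $F(\phi) \le \epsilon_2$ forces $\norm{\ket{\phi}}_{U^3}^{8} \le (C_1^{2}\epsilon_2)^{1/(2C_2)}$, and hence $R \le \norm{\ket{\phi}}_{U^3}^{8} \le (C_1^{2}\epsilon_2)^{1/(2C_2)}$.

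Step (i) is the converse (and easy) direction: large stabilizer fidelity forces a large Gowers-$3$ norm. Here I would use that the quantity $R$ produced by Bell difference sampling is invariant under Clifford unitaries applied to $\ket{\phi}$ (the Bell measurement is Clifford covariant, so a Clifford merely relabels the outcome distribution; this is implicit in \cite{gross2021schur} and the works building on it). Let $\ket{s}$ be a stabilizer state with $\abs{\braket{s}{\phi}}^{2} = F(\phi) \ge \epsilon_1$ and pick a Clifford $U$ with $U\ket{s} = \ket{+}^{\otimes n}$. Writing $U\ket{\phi} = \tfrac{1}{\sqrt{N}}\sum_x g(x)\ket{x}$ and letting $\widehat g(\xi) = \tfrac{1}{N}\sum_x (-1)^{\xi\cdot x} g(x)$ be its Fourier coefficients, the amplitude of $U\ket{\phi}$ on $\ket{+}^{\otimes n}$ is $\widehat g(0)$, so $\abs{\widehat g(0)}^{2} = \abs{\braket{+^{\otimes n}}{U\phi}}^{2} = \abs{\braket{s}{\phi}}^{2} = F(\phi) \ge \epsilon_1$. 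Since $\norm{U\ket{\phi}}_{U^3} = \norm{g}_{U^3}$ by definition, monotonicity of the Gowers norms together with the identity $\norm{g}_{U^2}^{4} = \sum_{\xi}\abs{\widehat g(\xi)}^{4}$ give $\norm{U\ket{\phi}}_{U^3}^{8} \ge \norm{g}_{U^2}^{8} = \big(\sum_{\xi}\abs{\widehat g(\xi)}^{4}\big)^{2} \ge \abs{\widehat g(0)}^{8} \ge \epsilon_1^{4}$, and therefore, using Clifford invariance, $R = R(\ket{\phi}) = R(U\ket{\phi}) \ge \norm{U\ket{\phi}}_{U^3}^{16} \ge \epsilon_1^{8}$. (Alternatively one may simply quote a bound of the form $R \ge F(\phi)^{O(1)}$ already available in \cite{gross2021schur, Grewal2024improved}.)

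For (iii) it is enough that $(C_1^{2}\epsilon_2)^{1/(2C_2)} \le \tfrac{1}{3}\epsilon_1^{8}$; since $\epsilon_2 \le \epsilon_1^{C}$ the left side is at most $C_1^{1/C_2}\epsilon_1^{C/(2C_2)}$, so, provided $\epsilon_1$ is bounded away from $1$ (the regime of interest; the case $\epsilon_1 \to 1$ is a proximity-to-stabilizer test already handled by prior work), choosing $C$ to be a large enough absolute constant depending only on $C_1$ and $C_2$ makes $C_1^{1/C_2}\epsilon_1^{C/(2C_2)-8} \le \tfrac{1}{3}$, which is exactly the inequality needed; the procedure then reports ``close'' iff $\widehat R > \tfrac{2}{3}\epsilon_1^{8}$. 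I expect the only genuine content beyond \cref{th:main1} to lie in step (i): verifying that $R$ is a bona fide Clifford invariant, so that one may rotate the nearest stabilizer to a fixed reference state, and that Gowers-norm monotonicity is applied in the normalization under which $\norm{\cdot}_{U^3}$ is defined here; everything else is parameter bookkeeping.
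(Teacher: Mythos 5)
Your proposal is correct and follows the same route the paper intends (the paper itself only sketches this corollary): estimate $R$ with $\norm{\ket{\phi}}_{U^3}^{16}\le R\le\norm{\ket{\phi}}_{U^3}^{8}$ via Bell difference sampling, lower-bound $R$ by $\poly(\epsilon_1)$ in the close case using Clifford invariance and $U^2$-to-$U^3$ monotonicity, and upper-bound it in the far case via the contrapositive of Theorem~\ref{th:main1}. The only point needing care is the one you flag yourself: for $\epsilon_1$ near $1$ the inequality $C_1^{1/C_2}\epsilon_1^{C/(2C_2)}\le\tfrac12\epsilon_1^{8}$ fails for any absolute $C$, but this is harmless since for $\epsilon_1$ above a fixed constant (say $0.9$) the non-tolerant/weakly-tolerant tester of \cite{Grewal2024improved} already covers $\epsilon_2\le\epsilon_1^{C}$, so splitting into the two regimes completes the argument with an absolute $C$.
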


 Our second main result proves a relationship between stabilizer fidelity and approximate stabilizer rank with applications to distinguishing low stabilizer rank states from Haar states. For a quantum state $\ket{\phi}$, $\delta$-approximate stabilizer rank $\chi_\delta (\phi)$ is the minimum number $r$ such that there exist stabilizer states $\ket{s_1}, \ldots, \ket{s_r}$ and $c_1, \ldots, c_r \in \mathbb{C}$ such that $\|\ket{\phi} - c_1 \ket{s_1} + \ldots + c_r \ket{s_r}\| \leq \delta$. For exact rank (i.e., $\chi_0$), we use the notation $\chi$.
\begin{theorem} [Relating stabilizer rank and fidelity]
\label{th:main2}
    For each $k>0$ there exists $\delta_k > 0$ such that the following hold. Let $\ket{\phi}$ be a quantum state with $\chi(\phi) \leq k$. Then, $F(\phi) \geq \delta_k$, where $\delta_k$ is a number that only depends on $k$.
\end{theorem}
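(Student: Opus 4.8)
The plan is to reduce the statement to a claim about Gram matrices of stabilizer states and then to establish that claim using the quantization of stabilizer overlaps. Write $\ket{\phi}=\sum_{i=1}^{k'}c_i\ket{s_i}$ for a \emph{minimal} exact stabilizer decomposition, so $k'=\chi(\phi)\le k$; minimality forces $\ket{s_1},\dots,\ket{s_{k'}}$ to be linearly independent (if one were a combination of the others we could drop it). Let $M$ be their Gram matrix, $M_{ij}=\braket{s_i}{s_j}$; it is Hermitian positive definite with $M_{ii}=1$ and $|M_{ij}|\le 1/\sqrt2$ for $i\ne j$, since distinct stabilizer states have overlap in $\{0\}\cup\{2^{-j/2}:j\ge1\}$. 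From $1=\braket{\phi}{\phi}=\sum_i c_i\,\overline{\braket{s_i}{\phi}}$ and the triangle inequality, $1\le\|c\|_1\cdot\max_i|\braket{s_i}{\phi}|$; combining $\|c\|_1\le\sqrt{k'}\,\|c\|_2$ with $\lambda_{\min}(M)\,\|c\|_2^2\le c^\dagger M c=\braket{\phi}{\phi}=1$ gives
\[
F(\phi)\ \ge\ \max_i|\braket{s_i}{\phi}|^2\ \ge\ \frac1{\|c\|_1^2}\ \ge\ \frac{\lambda_{\min}(M)}{k'}\ \ge\ \frac{\lambda_{\min}(M)}{k}.
\]
So it suffices to prove the crux: there is a function $h:\mathbb N\to(0,1]$, \emph{independent of the number of qubits}, such that the Gram matrix of any tuple of $k$ linearly independent stabilizer states has $\lambda_{\min}\ge h(k)$. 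One then takes $\delta_k:=h(k)/k$ (and makes $h$ nonincreasing). Note this route does not need \cref{th:main1}.

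The point of the crux is that the quantization of stabilizer overlaps makes the set of achievable Gram matrices rigid near singular matrices. For $k=2$, $\lambda_{\min}(M)=1-|M_{12}|\ge1-1/\sqrt2$. For $k=3$ the mechanism is already visible: after a Clifford puts $\ket{s_1}=\ket{0^n}$, an overlap $|\braket{s_1}{s_i}|=1/\sqrt2$ forces $\ket{s_i}=\tfrac1{\sqrt2}(\ket{0^n}+\mu_i\ket{v_i})$, and then $\braket{s_2}{s_3}\in\{0,\tfrac12,\tfrac1{\sqrt2},1\}$, with the value $0$ occurring exactly when the three states are linearly dependent — so a Gram matrix that is ``almost'' singular is in fact singular, and over linearly independent triples $\lambda_{\min}$ is bounded below by an absolute constant. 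I would prove the crux in general by induction on the number of qubits $n$: put $\ket{s_1}=\ket{0^n}$, split each state over the first qubit as $\ket{s_i}=\ket0\otimes\ket{p_i}+\ket1\otimes\ket{q_i}$ (where $\ket{p_i},\ket{q_i}$ are sub-normalized $(n{-}1)$-qubit stabilizer states with $\|\ket{p_i}\|^2+\|\ket{q_i}\|^2=1$, since projecting a stabilizer state onto a Pauli eigenspace yields a sub-normalized stabilizer state), use $\big\|\sum_i a_i\ket{s_i}\big\|^2=\big\|\sum_i a_i\ket{p_i}\big\|^2+\big\|\sum_i a_i\ket{q_i}\big\|^2$, pass to linearly independent sub-tuples of $\{\ket{p_i}\}$ and of $\{\ket{q_i}\}$ and apply the inductive hypothesis there (the coefficients re-expressing the remaining, dependent, sub-states are bounded by the inductive $\lambda_{\min}$ bound), then recombine using $\sum_i\|\ket{p_i}\|^2 e_ie_i^T+\sum_i\|\ket{q_i}\|^2 e_ie_i^T=I$.

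The main obstacle is exactly the crux, and within it the accounting in the inductive step: a naive recursion loses a constant factor per qubit, so one must show that the loss is incurred only when linear \emph{dependence} first appears on a qubit and charge each such event against the rank $k$, so that the final bound $h(k)$ depends on $k$ alone and not on $n$. An alternative route to the crux is a thresholding argument: zeroing all overlaps of magnitude below a suitable $\tau_k$ lands one in a \emph{finite} family of Gram matrices, on which $\lambda_{\min}$ is either $0$ or bounded below by a positive constant, and one argues — again using the stabilizer structure — that deleting these tiny overlaps cannot destroy positive definiteness. With the crux in hand the theorem follows with $\delta_k=h(k)/k$; the extension to $\delta$-approximate stabilizer rank is then routine, since replacing $\ket{\phi}$ by a normalized minimal-rank stabilizer approximant changes $F$ by $O(\delta)$.
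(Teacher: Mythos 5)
Your reduction of the theorem to the statement that the Gram matrix of any $k$ linearly independent stabilizer states has $\lambda_{\min}$ bounded below by a quantity $h(k)$ depending only on $k$ (not on $n$) is correct, and it is exactly the reduction the paper makes (the paper gets $F(\phi)\ge \lambda_{\min}(G)/k$ via $1=\phi^\dagger A G^{-1}A^\dagger\phi\le\norm{G^{-1}}\,\norm{A^\dagger\phi}^2$; your $\ell_1$--$\ell_2$ route gives the same bound). The problem is that the crux itself is not proven, and you say so yourself: ``the main obstacle is exactly the crux, and within it the accounting in the inductive step.'' The qubit-by-qubit induction as sketched does lose a constant factor per qubit, and the repair you propose --- charging each loss to an event where ``linear dependence first appears on a qubit'' --- is not established and is not obviously true: after splitting $\ket{s_i}=\ket{0}\ket{p_i}+\ket{1}\ket{q_i}$, the tuples $\{\ket{p_i}\}$ and $\{\ket{q_i}\}$ can each be rank-deficient at every level of the recursion without the total rank dropping, and the coefficients needed to re-express the dependent sub-states compound multiplicatively down the recursion. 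The alternative thresholding route assumes precisely the fact to be proved (``deleting these tiny overlaps cannot destroy positive definiteness''). So what you have is a correct reduction plus a correct identification of the hard lemma, but no proof of it.

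For comparison, the paper proves the crux (its Lemma~\ref{lm:lambda-min}) non-constructively: it supposes a sequence of nonsingular stabilizer Gram matrices with $\lambda_{\min}\to 0$, extracts a convergent subsequence by compactness, and uses the quantization of stabilizer overlaps ($\braket{s}{s'}=i^\ell 2^{-m/2}$ or $0$) to show each limiting entry is either eventually constant and nonzero, or tends to zero. It then builds a graph on $[k]$ with edges on the eventually-constant nonzero entries and shows, by bounding the sizes of the affine supports along a path, that a vanishing-but-nonzero entry cannot occur within a connected component; hence the limit matrix is block diagonal with blocks that are principal submatrices of the $G_n$, so it is nonsingular --- a contradiction. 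This is why the paper only obtains existence of $\delta_k$ with no explicit value. If you want to salvage your approach, you either need to carry out the inductive accounting in full (which would actually be a stronger, effective result than the paper's) or replace it with a limiting argument of the paper's type.
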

\noindent We note that our result only proves the existence of $\delta_k$; we leave a more concrete relationship to future work. We conjecture that $\delta_k \geq 2^{- k}/C$ for an absolute constant $C$. The above result holds for approximate rank $\chi_\delta$ for any $0 \leq \delta < \delta_k$.  
Previously, it was known \cite{mehraban2023quadratic, Labib2022stabilizerrank} that for a quantum state $\ket{\phi} = \frac{1}{\sqrt{N}} \sum_x g(x) \ket{x}$, $\chi(\phi) \geq \frac{2}{3} \log (\frac{\alpha^2}{\beta \sqrt{F(\phi)}})$, where $\alpha = \min_x |g(x)|$, $\beta = \max_x |g(x)|$. Therefore, the bound is not useful when $\ket \phi$ does not have full support or has large amplitudes. We note that the result does not hold in the other direction since we can always find quantum states with high stabilizer fidelity and high stabilizer rank. For instance, let $\epsilon \ll 1$, $\ket{s} \in \Stab_n$ and $\ket{\psi}$ be a quantum state with high stabilizer rank, e.g., a Haar state which has approximate rank $\geq 2^{n}/\poly(n)$ \cite{mehraban2023quadratic}. Then for any $k$ and $\delta_1, \delta_2$ we can choose $\epsilon$ such that (after normalization) $\sqrt{1-\epsilon} \ket{s} + \sqrt{\epsilon} \ket{\phi}$ has fidelity $\geq 1- \delta_1$ and $\delta_2$-approximate rank $\geq k$. For more details, see \cref{sec:other-rel}. 

Using Bell difference sampling (performed in \cite{grewal2022low}; see also Appendix~\ref{sec:bell-sampling}) we obtain the following implication.

\begin{corollary} [Low stabilizer rank states are not pseudorandom]
\label{cor:main2}
    For each constant $k$, there exists a quantum algorithm with gate complexity $\poly(n)$ that given $O(1)$ copies of a quantum $\ket{\phi}$ can distinguish cases
    \begin{enumerate}
        \item $\chi(\phi) \leq k$
        \item $\ket{\phi}$ is sampled from Haar measure
    \end{enumerate}
    with probability of error less than $1/3$.
\end{corollary}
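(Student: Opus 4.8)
\textbf{Proof proposal for \Cref{cor:main2}.} The plan is to obtain the corollary as a short reduction from \cref{th:main2} together with the stabilizer-fidelity-versus-Haar distinguisher of \cite{grewal2022low}. First I would invoke \cref{th:main2} to fix a constant $\delta_k>0$, depending only on $k$, such that every state with $\chi(\phi)\le k$ satisfies $F(\phi)\ge\delta_k$; set $k':=\lceil 1/\delta_k\rceil$, which is again a constant. Then any state falling in case~(1) satisfies $F(\phi)\ge 1/k'$, whereas a Haar-random state satisfies $F(\phi)\le 2^{-cn}$ with probability $1-o(1)$ for an absolute constant $c>0$ (the standard union bound over the $2^{\Theta(n^2)}$ stabilizer states, as used in \cite{grewal2022low}). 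Hence the promise of \cref{cor:main2} refines the promise ``$F(\phi)\ge 1/k'$ versus $\ket{\phi}$ Haar-random'', and it suffices to run the $O((k')^{12})$-copy Bell-difference-sampling test of \cite{grewal2022low} with parameter $k'$: since $k'=O(1)$ this uses $O(1)$ copies of $\ket{\phi}$ and a uniform circuit of size $\poly(n)$, and it errs with probability $<1/3$.

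Alternatively, and closer in spirit to \cref{th:main1}, I would phrase the test through the Gowers $3$-norm directly. By the Bell-difference-sampling primitive recalled in \cref{sec:bell-sampling}, $O(1/\delta^2)$ copies and a circuit of size $O(n/\delta^2)$ produce, to additive error $\delta$, an estimate of some quantity $R$ with $\norm{\ket{\phi}}_{U^3}^{16}\le R\le\norm{\ket{\phi}}_{U^3}^8$. In case~(1), \cref{th:main2} gives $F(\phi)\ge\delta_k$, and the converse direction of \cref{th:main1} --- high stabilizer fidelity forces a non-negligible Gowers $3$-norm --- yields $\norm{\ket{\phi}}_{U^3}^8\ge\gamma_k$ for a constant $\gamma_k>0$ depending only on $k$, hence $R\ge\gamma_k^2$. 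In case~(2), the contrapositive of \cref{th:main1} together with $F(\phi)\le 2^{-cn}$ forces $\norm{\ket{\phi}}_{U^3}^8$ to be exponentially small, hence $R=o(1)$. Taking $\delta:=\gamma_k^2/4$, a positive constant since $k$ is, the estimate of $R$ separates the two cases, so $O(1/\gamma_k^4)=O(1)$ copies and a $\poly(n)$-size circuit suffice.

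Since the real content lies in \cref{th:main2} and in pre-existing machinery, there is no substantial obstacle here; two points need care, both minor. The first is the converse of \cref{th:main1} used in the second route: writing $\ket{\phi}=\sqrt{F(\phi)}\,\ket{s}+\sqrt{1-F(\phi)}\,\ket{s^\perp}$ for a closest stabilizer state $\ket{s}$ and expanding the Bell-difference-sampling success quantity (a degree-six expression in the characteristic function $\Xi_\phi$) using that $\Xi_s$ is supported on an isotropic subspace, one checks it is bounded below by $\poly(F(\phi))$; this estimate is essentially already present in \cite{gross2021schur, grewal2022low}, so it can be cited rather than reproved. The second is that the Haar tail bound $F(\phi)\le 2^{-cn}$ only becomes effective once $n$ exceeds some constant $n_0(k)$; for $n<n_0(k)$ the Hilbert-space dimension $2^n$ is a constant, so $O(1)$-copy pure-state tomography recovers a classical description of $\ket{\phi}$ from which the promise is decided directly. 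Combining the two regimes gives the claimed $O(1)$-copy, $\poly(n)$-gate algorithm.
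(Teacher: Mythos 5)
Your first route is exactly the paper's argument: the corollary is treated there as an immediate consequence of Theorem~\ref{th:main2} (which guarantees $F(\phi)\ge\delta_k$ in case~(1)) combined with the Bell-difference-sampling distinguisher of \cite{grewal2022low} run with the constant parameter $\lceil 1/\delta_k\rceil$, so your proposal is correct. The alternative Gowers-norm route and the small-$n$ caveat are additional care beyond what the paper records.
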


\subsection{Proof ideas}
\label{sec:intro-proof-ideas}

\subsubsection{Ideas for Theorem \ref{th:main1}}
We first outline the proof ideas for the case of a phase state, i.e., $\ket{\phi} = \frac{1}{\sqrt{N}}\sum_{x} (-1)^{f(x)}\ket{x}$, which was proven before \cite{arunachalam2024tolerant} (weaker quasipolynomial bound was also suggested in remark A.2. of \cite{mehraban2023quadratic}). We then explain what limits us in proving the main result in Theorem~\ref{th:main1} and how we improve on the parameters of \cite{arunachalam2024tolerant} for non-phase states.

\paragraph{Ideas for phase states:}
For the special case of the phase state, the proof follows closely that of the inverse Gowers theorem \cite{hatami2019higher, Samorodnitsky2007} for classical functions and has two high-level ideas (based on \cite{arunachalam2024tolerant}):
\paragraph{(1) Probabilistic construction of an almost linear function}
Consider the usual Pauli generators $X^a$ and $Z^b$, for $a, b \in \F^n$ which satisfy $X^a \ket{x} = \ket{x + a}$ (addition mod $2$) and $Z^b \ket {x} = (-1)^{\langle{x,\alpha}\rangle} \ket{x}$, where $\langle x, \alpha \rangle = \sum_i x_i \alpha_i \mod 2$ is the inner product between $x$ and $\alpha$.
Using a probabilistic construction, we construct a function $\zeta : \F^n \to \F^n$ such that 1) ${\abs{\bra{\phi} X^y Z^{\zeta(y)} \ket{\phi}}^2} = \poly(\gamma)$ for $\poly(\gamma)$ fraction of $y \in \F^n$  and 2) $\zeta(x+y) = \zeta(x) + \zeta(y)$ with probability $\poly(\gamma)$ for randomly chosen $x$ and $y$. The probabilistic construction is as follows: We choose $\zeta(y) = \alpha$ with probability $\abs{\bra{\phi} X^y Z^{\alpha} \ket{\phi}}^2$ and independently from other $y$s. \emph{The main bottleneck in going beyond phase states is that this is a valid probability distribution only for phase states.}

\paragraph{(2) Ideas from additive combinatorics} Two techniques from additive combinatorics are used to extend the almost-linear function $\zeta$ to a linear function. See \cref{sec:background-additive-comb} for more details about these two tools. First, by applying the Balog-Szemerédi-Gowers theorem \cite{Balog1994AST}, we identify a subset $S$ of the graph of $\zeta$ with bounded doubling size (i.e., bounding $|S + S| \leq K |S|$, where $K$ is not too large). Next, a recent result by Gowers, Green, Manners, and Tao on Marton's conjecture \cite{gowers2023conjecturemarton} implies that any set with bounded doubling size (relative to its size) can be covered by a few translates of a linear subspace. Moreover, we can bound the size of this linear subspace. Finally, standard linear algebra allows us to cover this subspace using graphs of affine linear maps.

Combining these ideas, one get a linear map $\ell$ such that $\E[y]{\abs{\bra{\phi} X^y Z^{\ell(y)}\ket{\phi}}^2} = \poly(\gamma)$. There are standard techniques to find a quadratic phase state (which is the stabilizer state) whose overlap with $\ket{\phi}$ is $\poly(\gamma)$ from this result.

\paragraph{Ideas for non-phase states:}
When $\ket{\phi}$ is not a phase state, as discussed in \cite{arunachalam2024tolerant}, the proof for phase states faces several obstacles. In particular, contrary to phase states, which have overlap $\poly(\gamma)$ with a quadratic phase state,  computational basis states (which are stabilizers themselves) have overlap $O(1/\sqrt{N})$ with any quadratic phase state. On a technical level, if we set $f(y, \alpha) \eqdef \abs{\bra{\phi} X^{y} Z^{\alpha} \ket{\phi}}^2$, for phase state, we have $\sum_\alpha f(y, \alpha) = 1$ for all $y$, which is crucial in the proof. Nevertheless, for arbitrary state $\ket{\phi}$, the function $\sum_\alpha f(y, \alpha)$ can be concentrated on few $y$s. 

In \cite{arunachalam2024tolerant}, the authors obtain a weaker result for general states using different proof techniques. Instead of choosing a subset of all $(y, \alpha)$ that corresponds to the graph of almost linear maps, they sample each $(y, \alpha)$ with probability $f(y, \alpha)$ and independence from other $(y, \alpha)$s. They can show that this set is almost linear and can be ``linearized'' using ideas from additive combinatorics; in doing so, they have to consider a conjecture. However, the linear subspace they obtain might have $\poly(\gamma^{-1})$ non-commuting elements. Therefore, they need $\exp(\poly(\gamma^{-1}))$ subspaces that all elements are commuting to cover this linear subspace. Then, they show that the stabilizer corresponding to one of these subspaces has overlap $\exp(-\poly(\gamma^{-1}))$ with the original state.

We circumvent the difficulty in generalizing the phase state proof with another idea. Our proof gives a better bound and does not rely on any conjecture. We apply a random Clifford to our state, which has two properties: 1) all stabilizer measures are invariant under Clifford operations, and 2) They form a 3-design. The first property implies that showing Theorem~\ref{th:main1} for a state $\ket{\phi}$ is equivalent to showing the Theorem~\ref{th:main1} for $C\ket{\phi}$ for any Clifford $C$. We also use the second property of Clifford operation to show that for some $C$, the function $f(y, \alpha)$ defined for $C\ket{\phi}$ is ``balanced'' over different values of $y$. More precisely, we show that $\sum_\alpha f(y, \alpha) = O(1)$ for all $y$. While our state is still not a phase state (in fact, we can have amplitudes $\omega(1)$ for a random Clifford), we show that we can modify the proof for the phase state to work in this situation. 

Independently from our work, two other papers \cite{arunachalam2024note} and \cite{bao2024tolerant} obtained the same exponential improvement of the  bounds in \cite{arunachalam2024tolerant}. They both use the main technique in \cite{arunachalam2024tolerant} to deal with non-phase states, but complement it with ideas from graph theory or algebraic properties of Pauli group to bound the number of non commuting elements in a subspace of $\F^{2n}$.

\subsubsection{Proof ideas for Theorem~\ref{th:main2}}
To prove Theorem~\ref{th:main2}, we consider an arbitrary state $\ket{\phi}$ in the span of $k$ stabilizer state. We first show that to prove the stabilizer fidelity of $\ket{\phi}$ is $\Omega_k(1)$, it is enough to demonstrate that the minimum eigenvalue of the Gram matrix of any $k$ stabilizer state is either zero or lower bounded by $\Omega_k(1)$. To show this, we consider a sequence of Gram matrices of $k$ stabilizer states (possibly have different numbers of qubits). We then exploit a compactness argument to find a convergent subsequence of this sequence of Gram matrices. Next, we use several structures of stabilizer states to show that the sequence element should be fixed after some point. As a result, their minimum eigenvalue cannot converge to zero.

\subsection{Future directions and open questions}
\label{sec:future}

One important direction for future research is to refine the constant factors in our results. Currently, many constants are either unspecified or far from optimal. For example, although achieving an exponential-to-polynomial improvement in the relationship between parameters in Theorem~\ref{th:main1} marks significant progress, the constant factor \( C_2 \) (the exponent of \( \gamma \)) in our proof is currently $266$. We believe there is substantial potential for improving these constants. Another instance is $\delta_k$ in Theorem~\ref{th:main2}, which is not explicitly. Our proof technique only shows the existence of $\delta_k$ independent of $n$ but does not prove an explicit relationship. We conjecture that $\delta_k > 2^{-k}/C$ for an absolute constant $C$.

Another major open problem is extending the ``tolerant testing'' framework to stabilizer rank. In particular, consider the problem:
\begin{problem*} [Tolerant testing of approximate stabilizer rank]
Let $\ket{\phi} \in (\mathbb{C}^2)^{\otimes n}$, $\delta > 0$ and $k_1 \ll k_2$ such that:
either (i) there exists a quantum state with approximate stabilizer rank $\leq k_1$ that is within Euclidean distance $\delta$ of $\ket \phi$ or (ii) there exists a quantum state with approximate stabilizer rank $\geq k_2$ within $\delta$ distance of this quantum state. How many copies are necessary and sufficient to distinguish between (i) and (ii)?   
\end{problem*}

Currently, the existing tools are only suitable for $k_1 = 1$. For instance, this problem is an instance of \cite{gross2021schur} when $\delta = 0$.  However, the problem already becomes challenging for $k_1 = 2$ and $k_2 = 2^n/n^2$ (even when $\delta = 0$). The main difficulty is that stabilizer states constitute an overcomplete basis, and it is not clear how to decode the information about the minimal number of stabilizer states in any decomposition based on the characteristic function of the quantum states. For instance, for a stabilizer state, we know that the nonzero elements of the corresponding characteristic polynomial correspond to a linear subspace with $2^n$ elements (i.e., Lagrangian). Preserving such structure is crucial in the tools we use in this paper (or those in \cite{arunachalam2024tolerant}). However, if a quantum state is a linear combination of two stabilizer states with nontrivial overlap, then the structure breaks. We emphasize that due to sharp discontinuities in exact rank, tolerant testing would be meaningful only if we consider approximate rank.

Due to the inverse theorem for Gowers 3-norm \cite{Samorodnitsky2007}, we get a lot of leverage for distinguishing high stabilizer fidelity states from the Haar ensemble. However, counterexamples are known for inverse theorems for higher Gowers norms (See Theorem 5.6. of \cite{hatami2019higher}), suggesting that higher phase quantum states would be plausible candidates for pseudorandom quantum states. However, we note that inverse theorems also exist for Gowers $k$-norms, $k \geq 4$ for the so-called non-classical polynomials \cite{green2011inverse}, which may have high overlap with phase states of the same degree. Can we show rigorously that explicit families of quantum states with higher degree (e.g., degree $d$, for $d \gg 1$) phase structures are pseudorandom? 

\subsection*{Acknowledgment}
S. M. and M. T. acknowledge funding provided by NSF CCF-2013062. S. M. and M. T. are grateful to Daniel Liang, Vishnu Iyer, Kasso Okoudjou, and Makrand Sinha for their insightful conversations.
%Open questions
%Stabilizer rank 2 vs 2^n
%Optimal constant factors
%Pseudorandomness for phase states w degree > 2

% \section{Previous works}
% \label{sec:previous}
%Quantum property testing
%Sepehr paper 
%Tolerant testing
%Pseudorandomness

\section{Preliminaries}
\label{sec:preliminaries}
\subsection{Notations}

For a positive integer $k$ we use the notation $[k] := \{1,2, \ldots, k\}$ to denote the set of integers from $1$ to $k$. For any non empty set $S$ and function $f:S\to \C$, we define $\E[x\in S]{f(x)} = \frac{1}{\card{S}} \sum_{x\in S} f(x)$. $\indic{\cdot}$ is the indicator function. For any function $f:X\to Y$, we define its graph as $G(f) \eqdef \set{(x, f(x)), x\in X}$. $\F$ is the finite field of order (size) two.  For a function $g : \F^n \rightarrow \C$ and $y \in \F^n$, we define $\Delta_y g(x) = g(x) \overline{g (x + y)}$ to be the phase derivative of $g$ in the $y$ direction.  In particular, for $y_1, \ldots, y_k \in \F^n$ we have $\Delta_{y_k} \ldots \Delta_{y_1} g(x) = \prod_{S \subseteq [k]} g_S(x + \sum_{j \in S} y_j)$, where for every $x$ $g_S (x) = \overline{ g (x)}$ if $|S|$ is odd and $g_S = g$ otherwise. 

We work with $n$ qubit states throughout the paper, which are unit vectors in the vector space spanned by elements of $\F^n$.  We use the convention $N\eqdef 2^n$.  An $n$ qubit orthogonal unitary is a real $N\times N$ matrix with $O^tO = \one$  where $O^t$ is the transpose of $O$.  The Haar measure $\mathcal{O}$ over orthogonal matrices is (the unique) probability distribution over orthogonal matrices invariant under matrix multiplication.

\subsection{Boolean linear algebra}
We work with vector spaces over finite field $\F$.  The standard $n$-dimensional vector space is $(\F^n, +)$.  We consider the standard basis, $e_1, \cdots, e_n$,  for $\F^n$ where $e_j = 0^{j-1} 1 0^{n-j}$.  We use the inner-product $\langle x, y\rangle = \sum_{i=1}^n x_iy_i \mod 2$ for $x, y\in\F^n$. A \emph{subspace} $V$ is a subset of a larger vector space closed under addition and scalar multiplication (which implies $0 \in V$).  An \emph{affine subspace} is a subset of a vector space of the form $z + V$ for a vector $z$ and a subspace $V$.  A \emph{linear map} is a function between vector spaces preserving scalar multiplication and addition.  The \emph{transpose} of a linear map $\ell:\F^n \to \F^n$ is the unique linear  map $\ell^t:\F^n \to \F^n$ satisfying $\langle x, \ell(y) \rangle = \langle \ell^t(x), y \rangle$ for all $x, y\in \F^n$. A linear map $\ell:\F^n \to \F^n$ is called \emph{symmetric} if $\ell= \ell^t$. The \emph{diagonal} of a linear map $\ell:\F^n \to \F^n$ is a vector $d\in \F^n$ with $d_i = \langle e_i, \ell(e_i) \rangle$. In other words, if we consider the matrix entries $\ell_{i,j}$ for $\ell$ such that $\ell (e_j) = \sum_i\ell_{ij} e_i$, then $d_i = \ell_{ii}$. An \emph{affine linear map} is a function between vector spaces that is the sum of a linear map and a constant function. 

The following technical result will be useful in the proof of Theorem~\ref{th:main1} and will be proven in Appendix~\ref{sec:proof-linear-map}.

\begin{lemma}
    \label{lem:linear-map}
    Let $S\subset \F^{2n}$ and $V\subset \F^{2n}$ be an affine subspace. Let $U=\set{y\in \F^n: \exists y': (y, y')\in V}$. There exists an affine map $\ell:\F^n \to \F^n$ with $ \abs{G(\ell) \cap S} \geq \frac{\abs{S \cap V} \abs{U}}{\abs{V}}$.
\end{lemma}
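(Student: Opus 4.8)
The plan is to view the affine subspace $V \subset \F^{2n}$ through the ``projection to the first $n$ coordinates'' map $\pi:\F^{2n}\to\F^n$, $\pi(y,y')=y$, whose image is exactly $U$. The fibers of $\pi$ restricted to $V$ are cosets of the subspace $W := \{(0,w') : (0,w')\in V_0\}$, where $V_0$ is the linear part of $V$; all nonempty fibers have the same size $\abs{V}/\abs{U}$. First I would average over the fiber structure: writing $\abs{S\cap V} = \sum_{y\in U} \abs{S \cap V \cap \pi^{-1}(y)}$, the average over $y\in U$ of $\abs{S\cap V\cap\pi^{-1}(y)}$ is $\abs{S\cap V}/\abs{U}$. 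The goal target $\frac{\abs{S\cap V}\,\abs{U}}{\abs{V}}$ is exactly this average divided by $\abs{V}/\abs{U}$, i.e., it is the average of $\abs{S\cap V\cap\pi^{-1}(y)}$ divided by the common fiber size. So by averaging it suffices to find a single affine map $\ell$ whose graph meets $S$ in at least the average-over-fibers count, appropriately weighted.

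The cleaner route, which I would actually carry out, is a double-counting / probabilistic argument over the choice of $\ell$. Fix a base point $v_0=(u_0,u_0')\in V$, so $V = v_0 + V_0$. Choose $U_0 := \{y : (y,y')\in V_0\}$, the linear part of $U$; pick any linear section $\sigma:U_0\to\F^n$ with $(u,\sigma(u))\in V_0$ for all $u\in U_0$ (this exists since $U_0$ is the image of a linear map and we can split). Then every element of $V$ projecting to a given $y\in U$ has the form $(y,\, y'_0 + w')$ where $y'_0$ is a fixed ``canonical'' lift of $y$ built from $u_0'$ and $\sigma$, and $w'$ ranges over the subspace $W$ of ``vertical'' vectors in $V_0$. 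Now I would choose a uniformly random linear functional, or rather a random element $r\in W^{*}$-worth of freedom: define a random affine map $\ell_r:\F^n\to\F^n$ by $\ell_r(y) = (\text{canonical lift of }\pi(y)) + r(y)$ on $U$ and extended arbitrarily (affinely) off $U$, where $r$ is chosen so that $\ell_r$ is affine and $(y,\ell_r(y))\in V$ for $y\in U$. The point of randomizing is: for a fixed $(y,y')\in S\cap V$ with $y\in U$, the probability over the random choice that $\ell_r(y)=y'$ is $1/\abs{W} = \abs{U}/\abs{V}$. By linearity of expectation, $\E_r\,\abs{G(\ell_r)\cap S} \ge \sum_{(y,y')\in S\cap V,\ y\in U} \Pr[\ell_r(y)=y'] = \abs{S\cap V}\cdot \abs{U}/\abs{V}$, since every point of $S\cap V$ automatically has $y\in U$. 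Hence some choice of $r$ achieves the bound, giving the desired $\ell$.

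The one technical point I expect to need care with is the construction of the random family $\{\ell_r\}$ so that each is genuinely an affine map $\F^n\to\F^n$ (not just defined on $U$) while the events $\{\ell_r(y)=y'\}$ for $y\in U$ still each have probability exactly $\abs{U}/\abs{V}$ and no correlation is needed beyond a union-free expectation computation. Concretely: extend the affine parametrization of $V$ over $U$ to all of $\F^n$ by choosing a complement of $U_0$ in $\F^n$ and a complement of $W$ inside the vertical space, and let the ``free'' linear parameter be a uniformly random linear map from $U_0$ into $W$ (composed with projection $\F^n\to U_0$, plus a fixed affine offset). For $y\in U$, $\ell_r(y)$ ranges uniformly over the fiber $V\cap\pi^{-1}(y)$ of size $\abs{V}/\abs{U}$, so $\Pr[\ell_r(y)=y']=\abs{U}/\abs{V}$ whenever $(y,y')\in V$. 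Everything else is bookkeeping with dimensions over $\F$. The main obstacle is thus purely organizational — setting up the section $\sigma$ and the complements cleanly — rather than anything deep; once the parametrization is in place the expectation computation is immediate.
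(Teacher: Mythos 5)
Your proposal is correct and is essentially the paper's own argument in probabilistic dress: the paper fixes the same canonical affine section of $V$ over $U$ and then pigeonholes over the at most $\abs{V}/\abs{U}$ translates of its graph that meet $V$ (each in exactly $\abs{U}$ points), which is precisely your expectation over vertical offsets. One small repair to your randomization: take the constant offset in $W$ uniformly at random (a random translate alone suffices, and the random linear map $U_0\to W$ is unnecessary), since with a random linear part but a fixed offset the value $\ell_r(u_0)$ at the base point is deterministic and that fiber is not covered uniformly.
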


% \begin{lemma}
%     \label{lem:sym-ext}
%     Let $\ell:\F^n\to \F^n$. There exists a symmetric linear map $\ell':\F^n\to \F^n$ such that $\ell(y) = \ell'(y)$ for all $y$ such that $\ell(y) = \ell^t(y)$.
% \end{lemma}

\subsection{Analysis of Boolean functions}
For a function $g : \F^n \rightarrow \C$ its Fourier transform is the function $\hat{g} : \F^n \rightarrow \C$ with $\widehat{g }(\alpha) = \E[x\in \F^n] {(-1)^{\langle \alpha ,  x\rangle} g(x)}$. By Parseval's equality $\E[x\in\F^n]{|g(x)|^2} = \sum_{\alpha\in\F^n} |\widehat{g}(\alpha)|^2$. For two functions $f, g:\F^n \to\C$, we define $(f*g)(x) = \E[y\in\F^n]{f(y)g(x+y)}$. We have $\widehat{f * g}(\alpha) = \widehat{f}(\alpha) \widehat{g}(\alpha) $. By \cite[Claim 4.8]{hatami2019higher}, we  have for two real function $f,g:\F^n\to \R$
\begin{align}
\label{eq:fourier4}
    \sum_\alpha \widehat{f g}(\alpha)^4 
        &= \E[y]{\pr{\E[x]{\Delta_y f(x) \Delta_y g (x)}}^2}.
\end{align}

\subsection{Characteristic functions}
In this subsection, we review the definition of the characteristic function of a quantum state and its properties.  Let $\ket{\phi} = \frac{1}{\sqrt{N}} \sum_{x\in\F^n} g(x) \ket{x}$ be a vector in $\C^N$. We define the characteristic function $f_\phi:\F^{2n} \to \R$ of $\ket{\phi}$ with:
\begin{align}
    f_{\phi}(y, \alpha) = \abs{\widehat{\Delta_y g}(\alpha)}^2,
\end{align}
 for $y, \alpha \in \F^{n}$. Using quantum information language, we have $f_\phi(y, \alpha) = \abs{\bra{\phi} X^y Z^\alpha \ket{\phi}}^2$. When $\phi$ is clear from the context, we denote $f_\phi$ by $f$.  By Parseval's equality, we have (See \cref{a:basic-calculations} for a proof):
\begin{align}
    \frac{1}{N} \sum_{z\in\F^{2n}} f(z) = (\E[x]{\abs{g(x)}^2})^2 = \braket{\phi}{\phi}^2
    \label{eq:parseval-f}
\end{align}
For any characteristic function $f$ it holds that (adapted from Lemma 2.9 of \cite{arunachalam2024tolerant}; also Fact 3.2 of \cite{grewal2022low}) 
\begin{align}
    \label{eq:f-additive}
    \frac{1}{N^2} \sum_{z_1, z_2} f(z_1)f(z_2) f(z_1 + z_2) = \frac{1}{N}\sum_{z\in \F^{2n}} f(z)^3
\end{align}
We consider symplectic inner product on $\F^{2n}$ as $[(y_1, \alpha_1), (y_2, \alpha_2))] = \langle y_1, \alpha_2\rangle + \langle y_2, \alpha_1\rangle$ where $y_1, y_2, \alpha_1, \alpha_2 \in \F^n$ and $\langle \cdot, \cdot \rangle$ denote the standard inner product over $\F^n$. For any  $S\subset\F^{2n}$, we define $S^\perp \eqdef \set{z: [z, z'] = 0, \forall z' \in S}$. 
\begin{lemma}
\label{lem:phase-sum}
If $S$ is a linear subspace of $\F^{2n}$, then we have
\begin{align}
    \sum_{z\in S} (-1)^{[z, z']} = \abs{S} \cdot \indic{z' \in S^\perp}
    \label{eq:phase-sum}
\end{align}    
\end{lemma}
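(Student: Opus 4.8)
The plan is to evaluate the sum $\sum_{z \in S} (-1)^{[z,z']}$ directly by treating $z \mapsto (-1)^{[z,z']}$ as a character of the additive group $(S,+)$. First I would observe that, for a fixed $z' \in \F^{2n}$, the map $\chi_{z'} : \F^{2n} \to \{+1,-1\}$ given by $\chi_{z'}(z) = (-1)^{[z,z']}$ is a group homomorphism from $(\F^{2n},+)$ to $(\{\pm 1\},\times)$, because the symplectic form $[\cdot,\cdot]$ is bilinear over $\F$. Restricting to the subgroup $S$, we get a homomorphism $\chi_{z'}|_S : S \to \{\pm 1\}$.

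The key dichotomy is then the standard orthogonality-of-characters fact: either $\chi_{z'}|_S$ is the trivial homomorphism (constantly $+1$), in which case the sum is $\abs{S}$; or it is nontrivial, in which case it takes value $-1$ somewhere, say at $z_0 \in S$, and then the substitution $z \mapsto z + z_0$ permutes $S$ while flipping the sign of each term, forcing the sum to equal its own negative, hence $0$. The remaining step is to identify exactly when $\chi_{z'}|_S$ is trivial: by definition $\chi_{z'}(z) = 1$ for all $z \in S$ means $[z,z'] = 0$ for all $z \in S$, which is precisely the statement $z' \in S^\perp$ (with respect to the symplectic form). Combining, $\sum_{z\in S}(-1)^{[z,z']}$ equals $\abs{S}$ when $z' \in S^\perp$ and $0$ otherwise, which is exactly $\abs{S}\cdot\indic{z'\in S^\perp}$.

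I do not anticipate a genuine obstacle here — this is a routine character-sum argument — but the one point requiring a line of care is the shift step: one must check that $z \mapsto z + z_0$ is a bijection of $S$ onto itself (true because $S$ is closed under addition and $z_0 \in S$), and that $\chi_{z'}(z+z_0) = \chi_{z'}(z)\chi_{z'}(z_0) = -\chi_{z'}(z)$ uses only bilinearity of $[\cdot,\cdot]$, not symmetry. With these in hand the proof is two short paragraphs: establish the homomorphism property, then split on triviality and translate.
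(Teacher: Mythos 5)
Your proposal is correct and follows essentially the same route as the paper: both reduce to the dichotomy of whether the linear functional $z \mapsto [z,z']$ is identically zero on $S$ (equivalently, $z' \in S^\perp$), the only cosmetic difference being that you kill the nontrivial case with the translation trick $z \mapsto z + z_0$, whereas the paper counts $|h^{-1}(0)| = |h^{-1}(1)| = |S|/2$ via rank--nullity for $h(z) = [z,z']$. Either mechanism is standard and complete.
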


\begin{proof}
    If $z' \in S^\perp$, then the statement is trivially true.  Now suppose $z' \notin S^\perp$. Let $h : S \rightarrow \F$ with $h(z) = [z,z']$. Since $h$ is a linear function $\dim \ker (h) + \dim \text{Im} (h) = \dim (h)$. Since $z' \notin S^\perp$ there exists $z \in S$ such that $[z,z'] = 1$. Furthermore, there exist $z \in S$ (e.g., $z = 0^{2n}$) such that $[z,z'] = 0$. As a result, $\dim \text{Im} (h) = 1$. Therefore, $\dim \ker (h) = \dim (S) - 1$. Hence $|h^{-1}(0)| = |h^{-1} (1)| = |S|/2$ which implies $\sum_{z \in S} (-1)^{h(z)} = 0$.
\end{proof}

\begin{lemma} [Theorem 3.2. of \cite{grewal2022low}]
    For any $z\in \F^{2n}$ and any characteristic function $f$, the following identity holds
\begin{align}
    f(z) = \frac{1}{N}\sum_{z'\in \F^{2n}} (-1)^{[z, z']} f(z')
    \label{eq:f-sum}
\end{align}
relating $f$ to its Fourier transform with respect to the symplectic inner product.
\label{lem:f-sum}
\end{lemma}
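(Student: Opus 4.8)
The plan is to prove the identity by a direct character-sum computation starting from the definition $f(y,\alpha) = |\widehat{\Delta_y g}(\alpha)|^2$, writing $z = (y,\alpha)$ and $z' = (y',\alpha')$ so that $[z,z'] = \langle y,\alpha'\rangle + \langle y',\alpha\rangle$. (There is an equivalent, slightly slicker route through the Weyl-operator picture: with $\rho = \ket{\phi}\bra{\phi}$ and $W_z = X^y Z^\alpha$ one has $f(z) = |\bra{\phi}W_z\ket{\phi}|^2 = \tr(W_z\rho)\tr(W_z^\dagger\rho)$, the sign $(-1)^{[z,z']}$ is exactly the conjugation phase in $W_z W_{z'} W_z^{-1} = (-1)^{[z,z']}W_{z'}$, and the Hilbert--Schmidt orthogonality $\frac1N\sum_{z'}\tr(W_{z'}A)\,\tr(W_{z'}^\dagger B) = \tr(AB)$ of the Pauli basis collapses the sum to $\tr(W_z^\dagger\rho W_z\rho) = f(z)$. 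I would present the elementary Fourier version, since it stays entirely within the notation already set up and uses no normalization of $\ket{\phi}$.)

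In detail: first expand $f(z') = |\widehat{\Delta_{y'}g}(\alpha')|^2$ as the double average over $x,x'\in\F^n$ of $\frac{1}{N^2}(-1)^{\langle\alpha',x+x'\rangle}\,g(x)\overline{g(x+y')}\,\overline{g(x')}\,g(x'+y')$. Substitute this into $\frac1N\sum_{z'}(-1)^{[z,z']}f(z')$; the summand then carries the total phase $(-1)^{\langle\alpha',\,x+x'+y\rangle}\cdot(-1)^{\langle\alpha,y'\rangle}$. Summing over $\alpha'\in\F^n$ first and using $\sum_{\alpha'}(-1)^{\langle\alpha',v\rangle} = N\,\indic{v = 0}$ forces $x' = x + y$ and eliminates two of the three averages, leaving $\frac{1}{N^2}\sum_{x,y'}(-1)^{\langle\alpha,y'\rangle}\, g(x)\overline{g(x+y)}\;\overline{g(x+y')}\,g(x+y+y')$. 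Finally substitute $w = x+y'$: the first pair of factors becomes $\Delta_y g(x)$, the second pair becomes $\overline{g(w)}g(w+y) = \overline{\Delta_y g(w)}$, and the residual phase becomes $(-1)^{\langle\alpha,x\rangle + \langle\alpha,w\rangle}$, so the sum factors as $\frac1{N^2}\bigl(\sum_x(-1)^{\langle\alpha,x\rangle}\Delta_y g(x)\bigr)\,\overline{\bigl(\sum_w(-1)^{\langle\alpha,w\rangle}\Delta_y g(w)\bigr)} = |\widehat{\Delta_y g}(\alpha)|^2 = f(z)$. (The last step can alternatively be folded into a standard Parseval-type identity rather than redoing the substitution by hand.)

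I do not expect a genuine obstacle here; the argument is a short, finite sequence of character-sum manipulations, and the two averages over $\F^n$ collapse for free once the $\alpha'$-sum is done. The one spot that demands care is the final factorization: after the $\alpha'$-sum kills one index, the leftover phase must assemble into exactly $\langle\alpha,x\rangle + \langle\alpha,w\rangle$ — producing a Fourier coefficient times its own conjugate with \emph{no} residual symplectic cross-term — and it is precisely the bookkeeping of the substitution $w = x+y'$ (which sends $\langle\alpha,y'\rangle$ to $\langle\alpha,w\rangle+\langle\alpha,x\rangle$) that guarantees this. In the operator presentation the analogous point of care is verifying that $(-1)^{[z,z']}$ is exactly the commutator phase of the $W_z$ and is insensitive to whatever global-phase convention one fixes for the Pauli operators.
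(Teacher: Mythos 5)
Your computation is correct. Note that the paper does not actually prove this lemma: it imports it verbatim as Theorem 3.2 of \cite{grewal2022low}, so there is no in-paper proof to compare against. Your elementary character-sum derivation is a valid self-contained argument: expanding $f(y',\alpha')=|\widehat{\Delta_{y'}g}(\alpha')|^2$ as a double average over $x,x'$, the $\alpha'$-sum indeed produces $N\cdot\indic{x+x'+y=0}$ and forces $x'=x+y$, and the substitution $w=x+y'$ correctly splits the residual phase $(-1)^{\langle\alpha,y'\rangle}$ into $(-1)^{\langle\alpha,x\rangle}(-1)^{\langle\alpha,w\rangle}$ so that the double sum factors as $\widehat{\Delta_y g}(\alpha)\,\overline{\widehat{\Delta_y g}(\alpha)}=f(y,\alpha)$, with all normalizations matching the paper's convention $\widehat{h}(\alpha)=\frac1N\sum_x(-1)^{\langle\alpha,x\rangle}h(x)$. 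The cited source establishes the identity essentially by the Weyl-operator route you sketch parenthetically (the sign $(-1)^{[z,z']}$ as the conjugation phase of Weyl operators plus Hilbert--Schmidt orthogonality of the Pauli basis); your Fourier-analytic version buys a proof that stays entirely within the notation the present paper sets up and never invokes the operator formalism, at the cost of slightly more index bookkeeping. One small remark: as you note, the argument nowhere uses that $\ket\phi$ is normalized, which is consistent with the paper's later use of the lemma for general characteristic functions.
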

Using the above lemma, we can prove the following lemma.
\begin{lemma}
\label{lem:f-sum-ineq}
    For any subspace $V\subset \F^{2n}$ and any $z'\in\F^{2n}$, 
    \begin{align}
        \sum_{z\in V} f(z) \geq \sum_{z\in V} f(z + z').
    \end{align}
\end{lemma}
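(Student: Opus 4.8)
The plan is to expand both sides using the symplectic Fourier identity of \cref{lem:f-sum} together with \cref{lem:phase-sum}, which collapses the inner sum over $V$ and reduces the whole statement to the pointwise nonnegativity of the characteristic function $f$.

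First, I would compute $\sum_{z\in V} f(z)$ by substituting the identity of \cref{lem:f-sum}, swapping the order of summation, and applying \cref{lem:phase-sum} with $S=V$:
\[
  \sum_{z\in V} f(z) = \frac1N \sum_{w\in\F^{2n}} f(w) \sum_{z\in V}(-1)^{[z,w]} = \frac{\abs{V}}{N}\sum_{w\in V^\perp} f(w).
\]
Then, using bilinearity of the symplectic form, $[z+z',w] = [z,w] + [z',w]$, the same two steps give
\[
  \sum_{z\in V} f(z+z') = \frac1N \sum_{w\in\F^{2n}} (-1)^{[z',w]} f(w) \sum_{z\in V}(-1)^{[z,w]} = \frac{\abs{V}}{N}\sum_{w\in V^\perp}(-1)^{[z',w]} f(w).
\]
Subtracting the two identities yields
\[
  \sum_{z\in V} f(z) - \sum_{z\in V} f(z+z') = \frac{\abs{V}}{N}\sum_{w\in V^\perp}\bigl(1-(-1)^{[z',w]}\bigr) f(w),
\]
and every summand on the right is nonnegative, since $1-(-1)^{[z',w]}\in\{0,2\}$ and, writing $w=(y,\alpha)$, $f(w) = \abs{\bra{\phi} X^y Z^\alpha\ket{\phi}}^2 \geq 0$ by definition of the characteristic function. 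This gives the claimed inequality.

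I do not anticipate a genuine obstacle here: the argument is essentially a one-line Fourier-analytic identity once \cref{lem:f-sum,lem:phase-sum} are in hand. The only points requiring care are the bookkeeping — correctly using bilinearity of $[\cdot,\cdot]$ to factor out $(-1)^{[z',w]}$, and invoking \cref{lem:phase-sum} to annihilate all terms with $w\notin V^\perp$ — after which nonnegativity of $f$ does all the work. Conceptually, $\sum_{z\in V} f(z)$ equals, up to the factor $\abs{V}/N$, the total mass of $f$ on the symplectic complement $V^\perp$, while translating the argument by $z'$ only twists this sum by the $\pm1$-valued character $w\mapsto(-1)^{[z',w]}$, which cannot increase a sum of nonnegative terms.
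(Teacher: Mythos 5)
Your proposal is correct and follows essentially the same route as the paper's proof: both expand $\sum_{z\in V} f(z+z')$ via \cref{lem:f-sum}, collapse the inner sum with \cref{lem:phase-sum} to get $\frac{\abs{V}}{N}\sum_{t\in V^\perp}(-1)^{[t,z']}f(t)$, and conclude from the nonnegativity of $f$ that the untwisted sum (the case $z'=0$) dominates. The only difference is presentational — you subtract the two identities while the paper phrases it as the expression being maximized at $z'=0$.
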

\begin{proof}
    Consider an arbitrary $z' \in \F^{2n}$. It is enough to show that $\sum_{z\in V} f(z + z')$ is maximized when $z' = 0$.  Using Lemma~\ref{lem:phase-sum} and Lemma~\ref{lem:f-sum}, we rewrite this expression as 
    \begin{align}
        \sum_{z\in V} f(z + z') 
        &= \frac{1}{N}\sum_{z\in V} \sum_{t\in F^{2n}} (-1)^{[t, z+z']}f(t) \\
        &= \frac{1}{N}\sum_{t\in F^{2n}} f(t) (-1)^{[t, z']} \sum_{z\in V} (-1)^{[z, t]}\\
        &= \frac{\abs{V}}{N}\sum_{t\in V^\perp} f(t) (-1)^{[t, z']}
    \end{align}
    Since $f(t)$ is always non-negative, all terms in the summation above are non-negative when $z' = 0$.  Hence, the above expression takes its maximum value when $z' = 0$.
\end{proof}

\subsection{Stabilizer formalism}
\label{sec:stab-form}
Let $\mathcal{P}_n$ denote the Paulli group acting on $n$ qubits with  $\mathcal{P}_n := \{ i^c X^a Z^b, (a,b) \in \F^{2n}, c \in \Z_4\}$.  An $n$-qubit state $\ket{s}$ is called \emph{stabilizer state}, if there a exists an Abelian subgroup $S$ of size $2^n$ of $\mathcal{P}_n$ with $p\ket{s} = \ket{s}$ for all $p\in S$.   Let $\Stab_n$ be the set of all $n$-qubit stabilizer states.  Elements of $\Stab_n$ can be written of the form \cite{VanDenNest_2010}
\begin{align}
    \frac{1}{\sqrt{|A|} }\sum_{x\in A} i^{\ell(x)}(-1)^{Q(x)} \ket{x},
\end{align}
where $A\subset \F^n$ is  an affine subspace of $\F^n$ ($A = \{L y + v: y \in \F^m\}$, where $L \in \F^{n \times m}$, $V \in \F^n$), $\ell:\F^n \to \F$ is a linear function, and $Q:\F^n\to \F$ is a quadratic function.  

An $n$ qubit unitary $C$ is called \emph{Clifford} if for all $p\in \mathcal{P}_n$, $CpC^\dagger \in \mathcal{P}_n$. Let $\Cliff_n$ be the set of all Clifford operations on $n$ qubits.  We also need a subgroup of the Clifford group called \emph{real} Clifford group defined as unitaries generated by $Z$, $CNOT$, $H$ gates.  By definition, the entries of real Clifford unitaries are real in computational bases.  The following proposition is a known fact about stabilizer states and Clifford operations.
\begin{proposition}
        If $C$ is a Clifford unitary and $\ket{s}$ is a stabilizer state, then $C\ket{s}$ is a stabilizer state.
\end{proposition}
Additionally, we use the following result about $\Cliff^R_n$ \cite[Theorem~4]{Hashagen2018realrandomized}.
\begin{lemma}
    \label{lem:cliff-design}
    $\Cliff^R_n$ is an orthogonal 2-design, i.e., for any operator $\rho$ acting on $2n$ qubits, we have
        \begin{align}
            \E[C\in \Cliff^R]{ (C\otimes C) \rho (C^\dagger \otimes C^\dagger)} = \E[O \in \mathcal{O}]  {(O\otimes O) \rho (O^\dagger \otimes O^\dagger)}
        \end{align}
        where $\mathcal O$ is the orthogonal Haar distribution over $n$ qubits.
\end{lemma}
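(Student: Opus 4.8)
The plan is to recognize both averaging maps in the statement as the \emph{same} orthogonal projection and then reduce everything to a dimension count. For any compact group $G\subseteq\U{N}$ with Haar (or, when $G$ is finite, uniform) measure, the channel $\Phi_G:\rho\mapsto\E[g\in G]{(g\otimes g)\,\rho\,(g^\dagger\otimes g^\dagger)}$ on $\mathrm{End}(\C^N\otimes\C^N)$ is idempotent by translation-invariance of the average, and it is self-adjoint in the Hilbert--Schmidt inner product (each conjugation $\rho\mapsto(g\otimes g)\rho(g^\dagger\otimes g^\dagger)$ is an HS isometry), so it is the orthogonal projection onto its range, which is exactly the commutant $\mathcal{C}_G\eqdef\set{M:(g\otimes g)M=M(g\otimes g)\ \forall g\in G}$. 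Since real Clifford unitaries are orthogonal matrices we have $\Cliff^R_n\subseteq\mathcal{O}$ (here $\mathcal{O}$ denotes the group of $N\times N$ orthogonal matrices with Haar measure, as in the lemma), hence $\mathcal{C}_{\Cliff^R_n}\supseteq\mathcal{C}_{\mathcal{O}}$ automatically. Thus it remains only to prove the dimension inequality $\dim\mathcal{C}_{\Cliff^R_n}\le\dim\mathcal{C}_{\mathcal{O}}$: that forces $\mathcal{C}_{\Cliff^R_n}=\mathcal{C}_{\mathcal{O}}$, hence $\Phi_{\Cliff^R_n}=\Phi_{\mathcal{O}}$, which is the claim.

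Next I would pin down $\dim\mathcal{C}_{\mathcal{O}}$ and rewrite $\dim\mathcal{C}_{\Cliff^R_n}$ as a moment. By the first fundamental theorem of invariant theory for the orthogonal group (the level-$2$ Brauer algebra), $\mathcal{C}_{\mathcal{O}}$ is spanned by exactly three operators: the identity $\one$, the swap $\mathrm{SWAP}$, and $\ketbra{\Omega}{\Omega}$ with $\ket{\Omega}\eqdef\sum_{x\in\F^n}\ket{x}\otimes\ket{x}$ (this last operator is $\mathcal{O}$-invariant precisely because $(O\otimes O)\ket{\Omega}=\ket{\Omega}$ for orthogonal $O$); these three are linearly independent for $N=2^n\ge2$, so $\dim\mathcal{C}_{\mathcal{O}}=3$. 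On the other side, because $C$ has real matrix entries, the conjugation action $M\mapsto(C\otimes C)M(C^\dagger\otimes C^\dagger)$ is isomorphic to $C^{\otimes4}$ acting on $(\C^N)^{\otimes4}$, and the dimension of its fixed space is therefore $\dim\mathcal{C}_{\Cliff^R_n}=\E[C\in\Cliff^R_n]{(\tr C)^4}$. So the lemma is equivalent to the closed-form identity $\E[C\in\Cliff^R_n]{(\tr C)^4}=3$, where the inequality ``$\ge3$'' is already forced by $\Cliff^R_n\subseteq\mathcal{O}$.

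Finally I would prove this fourth-moment identity, which is the real content, by induction on $n$. The base case $n=1$ is a direct computation: $\Cliff^R_1=\langle H,Z\rangle$ is the order-$16$ dihedral subgroup of $O(2)$, comprising the eight rotations by multiples of $\pi/4$ (traces $2,\sqrt2,0,-\sqrt2,-2,-\sqrt2,0,\sqrt2$) and eight reflections (trace $0$), so $\frac1{16}(16+4+0+4+16+4+0+4)=3$. For the inductive step I would expand a commuting $M$ in the orthogonal Pauli basis $\set{X^aZ^b\otimes X^cZ^d:(a,b),(c,d)\in\F^{2n}}$ of $\mathrm{End}(\C^N\otimes\C^N)$: since $\Cliff^R_n$ is generated by the single-qubit real Cliffords together with $\mathrm{CNOT}$, it normalizes the real Pauli group $\set{\pm X^aZ^b:(a,b)\in\F^{2n}}$, so $\Phi_{\Cliff^R_n}$ permutes this basis up to signs, and $\mathcal{C}_{\Cliff^R_n}$ is spanned by the non-cancelling signed orbit-sums of the diagonal symplectic action of $\Cliff^R_n$ on $\F^{2n}\times\F^{2n}$; counting these orbits (using transitivity properties of the symplectic action on Pauli labels) yields exactly the three operators $\one$, $\mathrm{SWAP}$, $\ketbra{\Omega}{\Omega}$, i.e.\ $\dim\mathcal{C}_{\Cliff^R_n}=3$. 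The one genuinely delicate step is this last orbit count for general $n$; everything before it is formal. (This identity, and hence the lemma, is precisely \cite[Theorem~4]{Hashagen2018realrandomized}, which carries out exactly this analysis.)
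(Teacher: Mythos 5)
The paper does not actually prove this lemma---it is imported verbatim as \cite[Theorem~4]{Hashagen2018realrandomized}---so the only comparison available is with that reference, whose strategy your argument correctly reconstructs. Your formal reductions are all sound: the averaging channel is the orthogonal projection onto the commutant; $\Cliff^R_n\subseteq\mathcal{O}$ gives $\mathcal{C}_{\Cliff^R_n}\supseteq\mathcal{C}_{\mathcal{O}}$ for free; $\dim\mathcal{C}_{\mathcal{O}}=3$ follows from the first fundamental theorem for $O(N)$ together with the linear independence of $\one$, $\mathrm{SWAP}$, $\ketbra{\Omega}{\Omega}$ for $N\geq 2$; and the character identity $\dim\mathcal{C}_{\Cliff^R_n}=\E[C]{(\tr C)^4}$ is the standard fixed-point count for the conjugation representation (using that $\tr C$ is real). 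The $n=1$ computation for the order-$16$ dihedral group $\langle H,Z\rangle$ is also correct. Two caveats: first, the ``induction on $n$'' framing is vestigial, since your step for general $n$ is a direct orbit count on $\F^{2n}\times\F^{2n}$ that never invokes the inductive hypothesis; second, that orbit count (equivalently, $\E[C\in\Cliff^R_n]{(\tr C)^4}=3$, or $\leq 3$ given the containment) is the entire nontrivial content of the lemma, and you explicitly defer it to \cite{Hashagen2018realrandomized} rather than carry it out. As a self-contained proof it therefore has one outsourced step, but as a reduction of the lemma to the cited theorem it is complete and matches what the paper relies on.
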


\subsection{Measures of stabilizer complexity}
\label{sec:measures}

Here, we overview three measures of stabilizer complexity.  Each measure characterizes how much an arbitrary state resembles a stabilizer state.
\paragraph{Stabilizer fidelity} For an $n$ qubit quantum state $\ket{\phi}$, we define its stabilizer fidelity as the maximum overlap with a stabilizer state, i.e., 
\begin{align}
    F(\ket{\phi}) \eqdef \max_{\ket{s}\in \Stab_n}\abs{\braket{s}{\phi}}^2.
\end{align}
We know that $F(\ket{\phi}) = 1$ if and only if $\ket{\phi}$ is a stabilizer state.
\paragraph{Stabilizer rank} The stabilizer rank of a quantum state is defined as the minimum number of stabilizer states needed in any linear decomposition of the state into stabilizer states.  More formally, we define
\begin{align}
    \chi(\ket{\phi}) = \min\set{r: \exists \ket{s_1}, \cdots, \ket{s_r}\in \Stab_n \text{ such that } \ket{\phi}\in \text{span}(\ket{s_1}, \cdots, \ket{s_r})}
\end{align}
This quantity plays an important role in the classical simulation of important families of quantum circuits \cite{Bravyi_Gosset_2016}.  We have $\chi(\ket{\phi}) = 1$ if and only if $\ket{\phi}$ is a stabilizer state. We also have $\chi(\ket{\phi}) \leq N$ for all quantum states $\ket{\phi}$. 
\paragraph{Gowers norm} Gowers norm has been introduced in the context of additive combinatorics and found applications in theoretical computer science, e.g., for property testing of classical functions \cite{hatami2019higher}. 
For a function $g:\F^n\to \C$, we define the Gowers $d$-norm as
\begin{align}
    \norm{g}_{U^d} \eqdef \pr{\E[x, y_1, \dots, y_d]{\Delta_{y_1}\cdots \Delta_{y_d} g(x)}}^{\frac{1}{2^d}}
\end{align}
Following \cite{mehraban2023quadratic, arunachalam2024tolerant}, if $\ket{\phi} = \frac{1}{\sqrt{N}}\sum_{x\in \F^n }g(x)\ket{x} $ is a vector in $\C^{N}$, we define $\norm{\ket{\phi}}_{U^d} \eqdef \norm{g}_{U^d}$ 

When $d=3$ Lemma 3.2. of \cite{arunachalam2024tolerant} implies that
\begin{align}
    \label{eq:gower-char}
    \norm{\ket{\phi}}_{U^3}^8 =  \E[y]{ \sum_\alpha \abs{\widehat{\Delta_y g}(\alpha)}^4} = \frac{1}{N} \sum_{z\in \F^{2n}}f^2(z)
\end{align}
where $f$ is the characteristic function of $\ket{\phi}$. Furthermore, by Theorem~3.4 of \cite{arunachalam2024tolerant}, we have $\norm{\ket{\phi}}_{U^3} =1 $ if and only if $\ket{\phi}$ is a stabilizer state.

\subsection{Additive combinatorics}
\label{sec:background-additive-comb}

We need the following two results from additive combinatorics proven in \cite{Balog1994AST} and \cite{gowers2023conjecturemarton}, respectively.
\begin{theorem}
\label{th:bgs}
    Let $(G,+)$ be an Abelian group and $S\subset G$ with $\P[z_1, z_2\in S]{z_1 + z_2 \in S} \geq \epsilon$. There exists, $S'\subset S$ with $\abs{S'} \geq \frac{\epsilon \abs{S}}{3}$ and $\abs{S' + S'} \leq \pr{\frac{6}{\epsilon}}^8\abs{S}$.
\end{theorem}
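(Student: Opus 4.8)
The plan is the classical graph-theoretic (Balog--Szemer\'edi--Gowers) argument, which is particularly clean here: because the hypothesis forces all the relevant sums to land inside $S$, no ``popular-sums'' preprocessing is needed. Write $T \eqdef \set{(x,y)\in S\times S : x+y\in S}$, so that the hypothesis is exactly $\abs{T}\geq\epsilon\abs{S}^2$, and let $H$ be the bipartite graph with both sides equal to $S$ whose edge set is $T$ (so $x$ on the left is joined to $y$ on the right iff $x+y\in S$); then $H$ has at least $\epsilon\abs{S}^2$ edges. The goal is to extract from the left copy of $S$ a large subset $S'$ in which every pair of vertices is richly connected in $H$, and then turn that connectivity into a bound on $\abs{S'+S'}$ by counting representations.

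The heart of the argument is a dependent random choice. For a uniformly random right-vertex $y_0$, its neighbourhood $S_0 \eqdef N_H(y_0)\subseteq S$ satisfies $\E{\abs{S_0}^2} = \frac{1}{\abs{S}}\sum_{y}\deg(y)^2 \geq \epsilon^2\abs{S}^2$ by Cauchy--Schwarz. Call an ordered pair $(x,x')$ \emph{poor} if $x$ and $x'$ have fewer than $\delta\abs{S}$ common $H$-neighbours, where $\delta$ is a threshold of order $\epsilon^2$ to be fixed at the end; since a pair lies in $S_0\times S_0$ with probability exactly (its number of common neighbours)$/\abs{S}$, the expected number of poor pairs inside $S_0\times S_0$ is at most $\delta\abs{S}^2$. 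Choosing $y_0$ so that $\abs{S_0}^2$ minus a suitable multiple of the number of poor pairs in $S_0\times S_0$ is at least its expectation produces one $S_0$ with $\abs{S_0}\geq\frac{\epsilon}{\sqrt 2}\abs{S}$ and with at most $O(\delta\epsilon^{-2})\abs{S_0}^2$ poor pairs inside it. Deleting from $S_0$ the few vertices (few by Markov's inequality) that lie in an abnormally large number of poor pairs leaves a set $S'\subseteq S$ with $\abs{S'}\geq\frac{\epsilon}{3}\abs{S}$ such that every $x,x'\in S'$ admit at least $\Omega(\epsilon\abs{S})$ ``good intermediaries'' $v\in S'$ with both $(x,v)$ and $(v,x')$ non-poor.

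The final step converts this into additive structure. Fix $x,x'\in S'$ and a good intermediary $v$. Every common $H$-neighbour $y$ of $x$ and $v$ gives $x-v=(x+y)-(v+y)$ with both terms in $S$, and the map $y\mapsto(x+y,v+y)$ is injective, so $x-v$ has at least $\delta\abs{S}$ representations as a difference of two elements of $S$; the same holds for $v$ and $x'$. Writing $x-x'=(x-v)-(x'-v)$ and expanding exhibits at least $\Omega(\epsilon\abs{S})\cdot(\delta\abs{S})^2$ representations of $x-x'$ as a signed sum $a_1-a_2-a_3+a_4$ with $a_i\in S$, and the parametrisation --- by $v$ and the two chosen common neighbours --- is injective into $S^4$; since the number of four-tuples in $S^4$ is $\abs{S}^4$ and each has a unique signed sum, summing over the distinct values of $x-x'$ gives $\abs{S'-S'}\cdot\Omega(\epsilon\delta^2\abs{S}^3)\leq\abs{S}^4$, i.e. $\abs{S'-S'}\leq O(\epsilon^{-1}\delta^{-2})\abs{S}$. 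Taking $\delta$ an appropriate constant multiple of $\epsilon^2$ makes this at most $\pr{\frac{6}{\epsilon}}^8\abs{S}$; in $\F^{2n}$ (the ambient group in the applications in this paper) $S'-S'=S'+S'$, so this is exactly the claim, and for a general abelian group the corresponding bound for $S'+S'$ follows from a single use of the Ruzsa/Pl\"unnecke inequalities, absorbed by the deliberately generous exponent.

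The delicate point is the one intrinsic to Balog--Szemer\'edi--Gowers: the dependent random choice only makes \emph{most} pairs in $S_0\times S_0$ non-poor, never all of them, so one cannot directly conclude that every difference of two elements of $S_0$ has many representations. Handling this forces both the pruning step --- discarding the few vertices responsible for many poor pairs while losing only a constant factor, so that $\abs{S'}\geq\epsilon\abs{S}/3$ survives --- and the routing of each surviving pair through a good intermediary. Making the pruning threshold, the common-neighbour threshold $\delta$, and the final representation count mesh so that the bookkeeping yields exactly the explicit constants $\epsilon\abs{S}/3$ and $(6/\epsilon)^8\abs{S}$ is the step that needs care.
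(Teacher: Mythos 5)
The paper does not actually prove this statement: Theorem~\ref{th:bgs} is imported as a black box from the Balog--Szemer\'edi(--Gowers) literature, so there is no internal proof to compare yours against. Judged on its own terms, your argument is the standard dependent-random-choice proof and the core of it is sound: the estimate $\E{\abs{S_0}^2}\geq \epsilon^2\abs{S}^2$, the expected count of poor pairs, the choice of $y_0$ optimizing $\abs{S_0}^2$ minus a multiple of the poor-pair count, the Markov pruning giving $\abs{S'}\geq \epsilon\abs{S}/3$ with every pair routed through $\Omega(\epsilon\abs{S})$ good intermediaries, and the injective parametrisation of representations of $x-x'$ by $(v,y,y')$ all check out, and the bookkeeping with $\delta\asymp\epsilon^2$ does land the \emph{difference-set} bound $\abs{S'-S'}=O(\epsilon^{-5})\abs{S}$, comfortably below $(6/\epsilon)^8\abs{S}$.

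The genuine gap is in the last sentence, where you pass from $S'-S'$ to $S'+S'$ for a general abelian group. Pl\"unnecke--Ruzsa does not cost a constant factor here: from $\abs{S'-S'}\leq K\abs{S'}$ one gets $\abs{S'+S'}\leq K^2\abs{S'}$, and since your $K$ is of order $\epsilon^{-6}$ (the $\epsilon^{-5}$ bound is relative to $\abs{S}$, and $\abs{S'}\geq\epsilon\abs{S}/3$), squaring yields $\abs{S'+S'}=O(\epsilon^{-12})\abs{S'}$, which is \emph{not} absorbed by the exponent $8$. So as written your proof establishes the theorem only for groups of exponent two, where $S'-S'=S'+S'$ --- which happens to be the only case the paper uses ($G=\F^{2n}$ in Corollary~\ref{cor:additive}) --- but not the statement as given. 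The issue is structural: in your bipartite graph both $x$ and $x'$ sit on the same side, and same-side pairs only ever produce differences of colours ($x-x'=(x+y)-(v+y)-(x'+y')+(v+y')$). To get the sumset directly with the claimed exponent you need the length-three-path form of the argument, pairing a left vertex $x$ with a \emph{right} vertex $x'$ so that $x+x'=(x+y)-(v+y)+(v+x')$ is a signed sum of three elements of $S$, and then exploit the symmetry of the edge relation $x+y\in S$ to arrange that the refined left and right sets coincide. Either fix that step, or restrict the statement you prove to $\F_2$-vector spaces and note that this suffices for the paper.
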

\begin{theorem}
\label{th:marton}
    Let $S \subset \F^{n}$ be a subset with $|S+S|\leq K |S|$. Then, $S$ can be covered with $(2K)^8$ transitions of a subspace $V\subset \F^{n}$ with $\abs{V} \leq \abs{S}$
\end{theorem}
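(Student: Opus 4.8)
The statement to prove is \cref{th:marton}: any $S \subseteq \F^n$ with $|S+S| \le K|S|$ can be covered by $(2K)^8$ translates of a subspace $V \subseteq \F^n$ with $|V| \le |S|$. This is a (quantitative, characteristic-2) form of the polynomial Freiman--Ruzsa conjecture, now the Gowers--Green--Manners--Tao theorem \cite{gowers2023conjecturemarton}. My plan is not to reprove that deep result from scratch but to derive the exact form stated here from its standard conclusion, so the work is the bookkeeping that turns ``$S$ is contained in $c$ translates of a subspace of size $\le |A|$'' (in the form the entropic proof naturally outputs) into the constants $(2K)^8$ and $|V| \le |S|$.

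\textbf{Key steps.}
First I would invoke the PFR theorem in the following shape: if $|S+S| \le K|S|$ then there is a subspace $H \le \F^n$ with $|H| \le K^{2C}|S|$ (some absolute power) and $|S \cap H|$ comparable to $|S|$, equivalently $S$ is covered by at most $K^{O(1)}$ cosets of a subspace of size at most $|S|$ up to a polynomial loss. Concretely, the GGMT argument gives a subspace $H$ with $|H|/|S| \le K^{O(1)}$ and $|S|/|S\cap H| \le K^{O(1)}$; by a Ruzsa covering lemma applied to $S$ inside $S+H$ one passes from $H$ to a genuine covering of $S$ by translates of a subspace. Second, I would do the size reduction: given a subspace $W$ with $|W| \ge |S|$ that (together with $m$ translates) covers $S$, pass to a subspace $V \le W$ with $|S|/2 < |V| \le |S|$; each coset of $W$ breaks into $|W|/|V| \le 2|W|/|S|$ cosets of $V$, so the number of translates needed is multiplied by at most $2|W|/|S|$. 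Third, I would track constants through these two steps and through Ruzsa covering to check they close at $(2K)^8$ — this is where one must be careful, since the naive exponents coming out of the entropic PFR proof are larger; the paper presumably uses the version of the theorem phrased with covering number $(2K)^{C}$ and the stated $C=8$ reflects the best known bound (or a slightly lossy but clean restatement tailored to how \cref{th:bgs} feeds into it, since $\csk$-type constants there are also $(6/\epsilon)^8$).

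\textbf{Main obstacle.}
The genuinely hard mathematics — the entropic proof that small doubling forces approximate-subgroup structure with polynomial (not exponential) loss — is exactly the content of \cite{gowers2023conjecturemarton} and I would cite it as a black box; reproving it is out of scope. The real obstacle in \emph{this} write-up is the constant chase: making the covering number land at $(2K)^8$ and simultaneously forcing $|V| \le |S|$ (not merely $|V| \le K^{O(1)}|S|$) while only paying a further factor that still fits inside $(2K)^8$. Getting both bounds to hold simultaneously with the clean exponent $8$ requires either citing a sharpened form of PFR or absorbing slack carefully; I would expect the cleanest route is to state the theorem with whatever explicit polynomial bound \cite{gowers2023conjecturemarton} gives, then note that for the application in \cref{th:main1} only ``$(2K)^{O(1)}$ translates of a subspace of size $\le |S|$'' is used, so the precise exponent is not load-bearing and can be quoted as $8$ following the presentation there. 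If an exact $8$ is genuinely needed, the fallback is to combine the PFR subspace with \cref{th:bgs}-style Ruzsa covering and optimize, accepting that the verification of the exponent is the bulk of the proof's effort.
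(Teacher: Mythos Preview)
Your proposal is essentially correct in spirit, but it is worth noting that the paper does not prove \cref{th:marton} at all: it is stated in the preliminaries as one of the ``two results from additive combinatorics proven in \cite{Balog1994AST} and \cite{gowers2023conjecturemarton}'' and used as a black box, with no derivation or constant-tracking supplied. So your plan to invoke \cite{gowers2023conjecturemarton} directly is exactly what the paper does; the bookkeeping you outline (size-reducing the subspace, Ruzsa covering, chasing the exponent down to $8$) goes beyond what the paper provides, and the precise exponent is in any case not load-bearing for the application in \cref{cor:additive} and \cref{th:main1}.
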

Combining the above theorems yields the following corollary, which is the main tool we need from additive combinatorics.
\begin{corollary}
\label{cor:additive}
Let $S\subset \F^{n}$ with $\P[z_1, z_2\in S]{z_1 + z_2 \in S} \geq \epsilon$. There exists an \emph{affine} subspace $V\subset \F^{n}$ with $\abs{S \cap V} \geq \frac{\epsilon^{K_2}}{K_1} \abs{S}$ and $\abs{V}\leq \abs{S}$ where $K_1, K_2 \geq 0$ are absolute constants. One can choose $K_2 = 73$ and $K_1 = 3 \times 6^{72}$.
\end{corollary}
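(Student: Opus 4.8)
The plan is to chain together Theorem~\ref{th:bgs} (Balog--Szemer\'edi--Gowers) and Theorem~\ref{th:marton} (the polynomial Freiman--Ruzsa / Marton statement of Gowers--Green--Manners--Tao) in the obvious way, and then absorb the resulting cover of $S$ into a single affine subspace by a pigeonholing argument. First I would apply Theorem~\ref{th:bgs} with $\varepsilon$ as given: this produces $S' \subseteq S$ with $\abs{S'} \geq \frac{\varepsilon}{3}\abs{S}$ and $\abs{S'+S'} \leq (6/\varepsilon)^8 \abs{S}$. To feed this into Theorem~\ref{th:marton} I need a doubling bound relative to $\abs{S'}$, not $\abs{S}$: from $\abs{S} \leq \frac{3}{\varepsilon}\abs{S'}$ we get $\abs{S'+S'} \leq (6/\varepsilon)^8 \cdot \frac{3}{\varepsilon} \abs{S'} =: K\abs{S'}$, so $K = 3\cdot 6^8 /\varepsilon^9 = 3 \cdot (6/\varepsilon)^8/\varepsilon$.

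Next I would apply Theorem~\ref{th:marton} to $S'$ with this value of $K$: it gives a subspace $V \subseteq \F^n$ with $\abs{V} \leq \abs{S'} \leq \abs{S}$ such that $S'$ is covered by at most $(2K)^8$ translates $v_1 + V, \dots, v_m + V$ of $V$. Since these translates cover $S'$, by pigeonhole at least one translate $v_i + V$ satisfies $\abs{S' \cap (v_i + V)} \geq \abs{S'}/m \geq \abs{S'}/(2K)^8$. Taking $W := v_i + V$, this is an affine subspace with $\abs{W} = \abs{V} \leq \abs{S}$, and since $S' \subseteq S$ we get $\abs{S \cap W} \geq \abs{S' \cap W} \geq \abs{S'}/(2K)^8 \geq \frac{\varepsilon \abs{S}}{3(2K)^8}$. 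So $W$ is the desired affine subspace $V$ in the statement.

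Finally I would bound the constants to match the claimed $K_1 = 3\times 6^{72}$, $K_2 = 73$. We have $(2K)^8 = 2^8 K^8 = 2^8 \cdot 3^8 \cdot 6^{64} / \varepsilon^{72} = (2\cdot 3\cdot 6^8)^8/\varepsilon^{72} = 6^{72}/\varepsilon^{72}$, using $2\cdot 3\cdot 6^8 = 6^9$. Hence $\abs{S \cap W} \geq \frac{\varepsilon}{3} \cdot \frac{\varepsilon^{72}}{6^{72}}\abs{S} = \frac{\varepsilon^{73}}{3\cdot 6^{72}}\abs{S}$, giving exactly $K_2 = 73$ and $K_1 = 3 \times 6^{72}$. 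I would also note the harmless edge cases: if $\varepsilon$ is large enough that $(6/\varepsilon)^8 < 1$ one still proceeds (the bounds only get better), and if $S$ is empty the statement is vacuous.

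The only mildly delicate point — not really an obstacle — is making sure the doubling constant fed to Theorem~\ref{th:marton} is expressed relative to $\abs{S'}$ rather than $\abs{S}$, since Theorem~\ref{th:bgs} naturally outputs $\abs{S'+S'}$ in terms of $\abs{S}$; the conversion costs the extra factor of $3/\varepsilon$ that bumps the exponent from $9\cdot 8 = 72$ in one place to the final $73$. Everything else is bookkeeping of constants, and the arithmetic identity $2 \cdot 3 \cdot 6^8 = 6^9$ is what makes the exponents collapse cleanly to a power of $6$.
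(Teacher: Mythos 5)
Your proposal is correct and follows exactly the paper's own argument: apply Balog--Szemerédi--Gowers, convert the doubling bound to be relative to $\abs{S'}$ (costing the factor $3/\epsilon$), apply the Marton/PFR covering theorem, and pigeonhole over the $(2K)^8 = 6^{72}\epsilon^{-72}$ translates. The constant bookkeeping, including the identity $2\cdot 3\cdot 6^{8} = 6^{9}$ that collapses the exponents, matches the paper's computation precisely.
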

\begin{proof}
    Applying Theorem~\ref{th:bgs} to $S$, we obtain $S'$ with 1) $S'\subseteq S$ 2) $\abs{S'} \geq \frac{\epsilon}{3} \abs{S}$ 3) $\abs{S' + S'} \leq \pr*{\frac{6}{\epsilon}}^8 \abs{S} \leq \frac{3}{\epsilon}\pr*{\frac{6}{\epsilon}}^8 \abs{S'}$.
    Now applying Theorem~\ref{th:marton} to $S'$, we obtain a subspace $V$ with 1) $\abs{V} \leq \abs{S'} \leq \abs{S}$ and 2) $S'$ can be covered with $(2\frac{3}{\epsilon}\pr{\frac{6}{\epsilon}}^8)^8 = 6^{72} \times \epsilon^{-72}$ transitions of $V$. Thus, there exists a transition $z + V$ such that $\abs{(z+V)\cap S} \geq \abs{(z+ V) \cap S'} \geq \frac{\epsilon^{72}}{6^{72} }\abs{S'} \geq  \frac{\epsilon^{73}}{ 3  \times 6^{72} }\abs{S}$ as claimed.
\end{proof}

\section{Main result 1: Relating Gowers 3-norm to fidelity}
\label{sec:proof1}
This section proves our first main result, Theorem~\ref{th:main1} (re-stated below).
\begin{theorem*} [Restatement of Theorem~\ref{th:main1}]
    Let $\ket{\phi}$ be a quantum state with $\norm{\ket{\phi}}_{U^3}^{8} \geq \gamma $. Then, there exists a stabilizer state $\ket{s}$ such that $\abs{\braket{\phi}{s}} \geq \frac{\gamma^{C_2}}{C_1}$ where $C_1, C_2 > 0$ are two absolute constants.
\end{theorem*}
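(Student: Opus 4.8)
The plan is to mirror the classical inverse-Gowers proof sketched in the introduction, but to first apply a random real Clifford to make the characteristic function $f_\phi(y,\alpha) = |\bra{\phi} X^y Z^\alpha \ket{\phi}|^2$ ``balanced'' in the $y$-marginal, so that the phase-state argument can be pushed through. Concretely: since $\norm{\cdot}_{U^3}$ and $F(\cdot)$ are Clifford-invariant (the Gowers norm via \eqref{eq:gower-char} and the fact that $f$ transforms covariantly under Cliffords, $F$ by \cref{sec:stab-form}), it suffices to prove \cref{th:main1} for $C\ket{\phi}$ for a conveniently chosen $C \in \Cliff^R_n$. First I would use \cref{lem:cliff-design} (the real Clifford group is an orthogonal $2$-design, and in fact we need a $3$-design-type estimate, which also follows since we are averaging a degree-$3$ polynomial in the state) to bound $\E_C \big[\sum_\alpha f_{C\phi}(y,\alpha)\big]$ and its fluctuations, concluding that for some $C$ we have $g_\phi(y) := \sum_\alpha f_{C\phi}(y,\alpha) = |\bra{C\phi} X^y (\sum_\alpha Z^\alpha) \ket{C\phi}|^2$-type quantity bounded by $O(1)$ uniformly in $y$; equivalently, the reduced state after the ``$y$-measurement'' is close to maximally mixed. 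Here I will use \eqref{eq:parseval-f} ($\frac1N\sum_z f(z) = 1$) and \eqref{eq:f-additive} together with the $3$-design property to control the second moment of $g_\phi(y)$ over $y$.

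Next, armed with a balanced $f := f_{C\phi}$ satisfying $\frac1N\sum_z f(z)=1$, $\sum_\alpha f(y,\alpha) = O(1)$ for all $y$, and $\frac1N\sum_z f(z)^2 = \norm{C\phi}_{U^3}^8 \ge \gamma$, I would run the two-step additive-combinatorics argument. Step (a): by \eqref{eq:f-additive} and Cauchy–Schwarz (in the style of \cite[Fact 3.2]{grewal2022low}/\cite{arunachalam2024tolerant}), show the set $S = \{z \in \F^{2n} : f(z) \ge \poly(\gamma)\}$ is large ($|S| \ge \poly(\gamma) N$) and additively structured: $\P_{z_1,z_2\in S}[z_1+z_2\in S] \ge \poly(\gamma)$, using that $f(z_1)f(z_2)f(z_1+z_2)$ summed is large and each $f$ is bounded on $S$ by the balance condition (this is exactly where the $O(1)$ bound replaces the $f\le 1$ bound available only for phase states). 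Step (b): apply \cref{cor:additive} to get an affine subspace $V \subset \F^{2n}$ with $|S\cap V| \ge \poly(\gamma)|S|$ and $|V| \le |S|$, then use \cref{lem:linear-map} to extract an affine map $\ell : \F^n \to \F^n$ whose graph $G(\ell)$ satisfies $|G(\ell)\cap S| \ge \poly(\gamma) N$, i.e. $\E_y |\bra{C\phi} X^y Z^{\ell(y)}\ket{C\phi}|^2 \ge \poly(\gamma)$. Then I would symmetrize $\ell$ (replacing $\ell$ by its symmetric part, at the cost of another $\poly(\gamma)$ factor, using \cref{lem:f-sum-ineq} and the transpose/diagonal machinery from the Boolean linear algebra subsection) so that the Paulis $\{X^y Z^{\ell(y)}\}$ mutually commute and hence define a stabilizer group; finally, the standard argument (as in \cite{arunachalam2024tolerant}, Section on extracting a quadratic phase, or \cite{Samorodnitsky2007}) converts a large average Pauli correlation along a Lagrangian into overlap $\poly(\gamma)$ with the corresponding quadratic-phase stabilizer state $\ket{s}$, giving $|\braket{C\phi}{s}| \ge \gamma^{C_2}/C_1$ and hence the theorem for $\ket{\phi}$ after applying $C^\dagger$.

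The main obstacle I anticipate is the balancing step: establishing that a single Clifford $C$ makes $\sum_\alpha f_{C\phi}(y,\alpha)$ uniformly $O(1)$ over \emph{all} $y$ simultaneously, rather than merely on average or with high probability for fixed $y$. A first-moment computation via the $2$-design gives the expectation, but controlling the maximum over the $N$ values of $y$ requires either a strong concentration/second-moment bound (using the $3$-design to handle $\E_C[g_\phi(y)^2]$ and a union bound over $y$, which costs a factor of $n$ that must be absorbable) or a cleverer deterministic argument; getting the polynomial dependence on $\gamma$ to come out right through this step, and through the subsequent symmetrization of $\ell$, is where the bulk of the technical work — and the large but absolute exponent $C_2$ — will live. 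A secondary subtlety is that after the Clifford the amplitudes of $C\ket{\phi}$ need not be bounded (they can be $\omega(1)$), so every place the phase-state proof used $|g(x)| = 1$ must instead be argued from the weaker $\ell_2$-normalization plus the $y$-balance, and I would need to check that the additive-energy inequalities and the final overlap extraction are robust to this.
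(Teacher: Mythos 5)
Your overall strategy is the paper's: apply a random real Clifford to balance the characteristic function, then run the inverse-$U^3$ machinery (probabilistic construction of an almost-linear $\zeta$, Balog--Szemer\'edi--Gowers, Marton, symmetrization, quadratic-phase extraction). However, the step you yourself flag as the main obstacle --- getting $\sum_\alpha f_{C\phi}(y,\alpha)=O(1)$ for \emph{all} $y$ simultaneously --- is precisely where your proposal has a genuine gap, and the route you lean toward (a $3$-design second-moment estimate plus a union bound over the $N$ values of $y$) would not close it: any union-bound or max-over-$y$ argument from low moments yields a uniform bound on the $y$-marginal that grows with $n$, and this factor enters the probabilistic construction through the probability $f(y,\alpha)/r(y)$ that $\zeta(y)=\alpha$ (three such factors multiply), hence propagates polynomially into the final overlap and destroys the claim that $C_1,C_2$ are absolute constants independent of $n$. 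The ``cleverer deterministic argument'' you hope for does exist and is the crux of the paper's proof: by Lemma~\ref{lem:f-sum-ineq} (a consequence of the symplectic Fourier self-duality of $f$, Lemma~\ref{lem:f-sum}), applied to the subspace $V=\set{(0,\alpha):\alpha\in\F^n}$, one has $\sum_\alpha f(y,\alpha)=\sum_{z\in V}f\of{z+(y,0)}\le\sum_{z\in V}f(z)=\sum_\alpha f(0,\alpha)=\E[x]{\abs{\widetilde{g}(x)}^4}$, i.e.\ the $y$-marginal is \emph{maximized at $y=0$}. Thus only the single number $\E[x]{\abs{\widetilde{g}(x)}^4}$ must be controlled, and the orthogonal $2$-design computation $\E[C]{\sum_x\abs{\bra{x}C\ket{\phi}}^4}=3/(N+2)$ already produces a Clifford achieving $\E[x]{\widetilde{g}(x)^4}\le 3$; no $3$-design, concentration, or union bound is needed.

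A second, smaller omission: the symmetrization and quadratic-phase extraction (your ``so that the Paulis mutually commute'') rely on the identity $f(y,\alpha)=0$ whenever $\langle y,\alpha\rangle=1$, which holds only for states with \emph{real} amplitudes; a real Clifford preserves realness but does not create it. The paper therefore first reduces to real $g$ by writing $g=g_R+ig_I$, using the Gowers-norm triangle inequality to select the part with $\norm{\cdot}_{U^3}^8\ge\gamma/2^8$, renormalizing, and observing at the end that for a stabilizer state with real coefficients $\abs{\braket{s}{\phi}}^2=\abs{\braket{s}{\phi_R}}^2+\abs{\braket{s}{\phi_I}}^2$. Without this reduction the Jensen-type step that replaces $\ell$ by a symmetric map breaks for complex amplitudes. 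Your closing observation --- that after the Clifford the amplitudes can be $\omega(1)$, so every use of $\abs{g(x)}=1$ in the phase-state proof must be rederived from the $\ell_2$-normalization together with the $y$-balance --- is accurate and matches what the paper actually does in Claims~4--9.
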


    Let $\ket{\phi} = \frac{1}{\sqrt{N}} g(x) \ket{x}$ with $\norm{\ket{\phi}}_{U^3}^8 = \norm{g}_{U^3}^8\geq \gamma$. We first prove if the result holds for the case where $g$ is a \emph{real} function, i.e., \emph{Claim~1} below holds, then the theorem holds for general complex-valued functions. We then prove \emph{Claim~1}.
    \paragraph{Claim~1.} If $g(x)\in \R$ for all $x$, then there exists a stabilizer state $\ket{s}$ with $\braket{s}{x} \in \R$ for all $x\in \F^n$ and $\abs{\braket{s}{\phi}} \geq \frac{\gamma^{C_2'}}{C_1'}$ for absolute constants $C_1' > 0$ and  $C_2' \geq 1$.
    
    We emphasize that the assumption that $g$ is real is essential in the proof of \emph{Claim~1} as Eq.~\eqref{eq:real} is not true for general $g$. We now show that \emph{Claim 1} implies our main theorem.
    \begin{proof}[Proof of Theorem~\ref{th:main1}]
        Assume that \emph{Claim~1} is true for absolute constants $C_1'$ and $C_2'$. If $g$ is an arbitrary complex function, we can write it as $g= g_R + i g_I$ where $g_R$ and $g_I$ are real functions. Also, define $\ket{\phi_R} = \frac{1}{\sqrt{N}} \sum_x g_R(x) \ket{x}$  and $\ket{\phi_I} = \frac{1}{\sqrt{N}} \sum_x g_I(x) \ket{x}$. By the triangle inequality for Gowers norm \cite{Gowers_2001} ($\norm{g}_{U^3} \leq \norm{g_R}_{U^3} + \norm{g_I}_{U^3}$), we have that either $\norm{g_R}_{U^3} \geq \frac{1}{2}\norm{g}_{U^3}$ or $\norm{g_I}_{U^3} \geq \frac{1}{2}\norm{g}_{U^3}$. Without loss of generality, we can assume that $\norm{g_R}_{U^3}^8 \geq \frac{1}{2^8}\norm{g}_{U^3}^8 \geq \frac{\gamma}{2^8}$. Furthermore, let us denote $\nu \eqdef \E[x]{\abs{g_R(x)}^2}$ where $0<\nu \leq 1$. We now set $\widetilde{g} = g_R/\sqrt{\nu}$ and $\ket{\widetilde{\phi}} = \frac{1}{\sqrt{N}}\sum_{x\in \F^n} \widetilde{g}(x)\ket{x}$, where $\widetilde{g}$ is a real function and
        \begin{align}
            \E[x\in\F^n]{\widetilde{g}(x)^2} = 1 \quad \text{and} \quad \norm{\widetilde{g}}_{U^3}^8 \geq \frac{\gamma}{2^8\nu^4}
        \end{align}
        
        Applying \emph{Claim 1} to $\widetilde{g}$, there exists a stabilizer state $\ket{s}$ with real coefficient in computation basis such that $\frac{1}{\sqrt{\nu}}\abs{\braket{s}{{\phi}_R}}= \abs{\braket{s}{\tilde{\phi}}} \geq \frac{1}{C_1'}\pr{\frac{\gamma}{2^8\nu^4}}^{C_2'}$. 
        Because $C_2'\geq 1$ and $\nu \leq 1$, we have $\abs{\braket{s}{{\phi}_R}} \geq \frac{\gamma^{C_2'}}{C_1'2^{8C_2'}}$
        Furthermore, since $\ket{s}$ has real coefficient, both $\braket{s}{\phi_R}$ and $\braket{s}{\phi_I}$ are real. As a result,
    \begin{align}
        \abs{\braket{s}{\phi}}^2 = \abs{\braket{s}{\phi_R} + i \braket{s}{\phi_I}}^2 = \abs{\braket{s}{\phi_R}}^2 + \abs{\braket{s}{\phi_I}}^2 \geq \abs{\braket{s}{\phi_R}}^2.    \end{align}
        Thus, Theorem~\ref{th:main1} holds for $C_2 = C_2'$ and $C_1 = 2^{8C_2'}C_1'$.
    \end{proof}

    We now turn to the proof of the \emph{Claim 1}. The main idea is to apply a random \emph{real} Clifford to balance the characteristic function of the state $\ket{\phi}$ so that we are able to use the proof technique for inverse Gower's norm theorem for Boolean functions.
    \paragraph{Claim~2.} There exists a Clifford $C\in \Cliff^R_n$ (see Subsection~\ref{sec:stab-form} for definitions) such that if $C\ket{\phi} = \frac{1}{\sqrt{N}} \sum_x \widetilde{g}(x) \ket{x}$, then $\E[x\in\F^n]{\widetilde{g}(x)^4} \leq 3$.

    \begin{proof}
       We apply a random Clifford to $\ket{\phi}$ chosen uniformly from $\Cliff_n^R$.
    Since $\Cliff_n^R$ is an orthogonal 2-design (Lemma~\ref{lem:cliff-design}), we have
    \begin{align}
        \E[C \in \Cliff^R_n]{ \sum_x \abs{\bra{x}C\ket{\phi}}^4}  = \E[O \in \mathcal{O}]{ \sum_x \abs{\bra{x}O\ket{\phi}}^4} = N \cdot \E[O] {\abs{\bra{x}O\ket{\phi}}^4} = \frac{3}{N + 2}
    \end{align}
    Therefore, there exists a real Clifford $C$ with 
    \begin{align}
        \sum_{x\in \F^n} \abs{\bra{x}C\ket{\phi}}^4 \leq \frac{3}{N+2} < \frac{3}{N}.
    \end{align}
    Defining $\widetilde{g}(x) = \sqrt{N} \bra{x} C\ket{\phi}$, we have
    \begin{align}
        \E[x\in \F^n]{\widetilde{g}(x)^4} = N \sum_{x\in \F^n}  \abs{\bra{x}C\ket{\phi}}^4 \leq 3.
    \end{align}
    This completes the proof of \emph{Claim~2}. 
    \end{proof}
    
    In the next claim, we relate the quantity $\E[x]{\abs{g(x)}^4}$ to the characteristic function of the quantum state.
    
    \paragraph{Claim~3.} Without loss of generality, we can assume that  $\sum_{\alpha} f(y, \alpha) \leq 3$ for all $y$.
    \begin{proof}
        Let $C$ be the real Clifford from \emph{Claim~2}. We consider $\ket{\psi} = C\ket{\phi} = \frac{1}{\sqrt{N}}\sum_{x\in\F^n} \widetilde{g}(x) \ket{x}$. First, note that  $\ket{\psi}$ has real coefficient in computational basis, and we have $F(\ket{\psi}) = F(\ket{\phi})$ and $\norm{\ket{\psi}}_{U^3} = \norm{\ket{\phi}}_{U^3}$ because stabilizer fidelity and Gower's norm are invariant under a Clifford operation. Therefore, if we prove \emph{Claim~1} for $\ket{\psi}$ it implies that the claim is true for $\ket{\phi}$. It is enough to show that for all $y\in \F^n$, we have $\sum_{\alpha \in \F^n} f_\psi(y, \alpha) \leq 3$. The inequality holds for $y = 0$ since
        \begin{align}
            \sum_{\alpha \in \F^n} f_\psi(0, \alpha)
            &= \sum_{\alpha \in \F^n} \abs{\widehat{\Delta_{0}\widetilde{g}}(\alpha)}^2\\
            &\stackrel{(a)}{=}\E[x]{\abs{\Delta_{0}\widetilde{g}(x)}^2}\\
            &= \E[x]{\abs{\widetilde{g}(x)}^4} \leq 3,
        \end{align}
        where $(a)$ follows from Parseval equality. Now define the subspace $V=\set{(0, \alpha): \alpha \in \F^n} \subseteq \F^{2n}$. By Lemma~\ref{lem:f-sum-ineq}, we have for arbitrary $y\in\F^n$
        \begin{align}
            \sum_{\alpha \in \F^n} f_\psi(y, \alpha) = \sum_{z \in V} f_\psi(z + (y, 0)) \leq \sum_{z\in V} f_\psi(z) = \sum_{\alpha \in \F^n} f_\psi(0, \alpha)\leq 3,
        \end{align}
        as desired.
    \end{proof}
    
    The proof from here uses the same approach as in \cite[Proof of Theorem~4.5]{hatami2019higher} with some modifications. We give the steps here, prove the first step requiring more adjustment, and defer other steps to the appendix.
    \paragraph{Claim~4.} Assume for a quantum state $\ket{\phi}$, $\norm{\ket{\phi}}_{U^3}^8 \geq \gamma$ and let $f$ be the corresponding characteristic function. Take $\delta = \frac{\gamma^2}{6}$ and $\epsilon =\frac{\gamma^2}{54} - O(2^{-n})$.  Then, there exists a function $\zeta:\F^n \to \F^n$ such that
    \begin{align}
        \P[y_1, y_2]{f(y_1, \zeta(y_1)) \geq \delta , f(y_2, \zeta(y_2)) \geq \delta, f(y_1 + y_2, \zeta(y_1 + y_2)) \geq \delta, \zeta(y_1) + \zeta(y_2) = \zeta(y_1 + y_2)} \geq \epsilon.
    \end{align}
    \begin{proof}
        For any $\zeta: \F^n \to \F^n$, we define 
        \begin{align}
            L(\zeta) \eqdef \P[y_1, y_2]{f(y_1, \zeta(y_1)) \geq \delta , f(y_2, \zeta(y_2)) \geq \delta, f(y_1 + y_2, \zeta(y_1 + y_2)) \geq \delta, \zeta(y_1) + \zeta(y_2) = \zeta(y_1 + y_2)}
        \end{align}
        We choose a random $\zeta$ and show that $\E[\zeta]{L(\zeta)} \geq \epsilon$. Let us denote $r(y) \eqdef \sum_{\alpha} f(y, \alpha)$. By \emph{Claim~3}, we can assume that $r(y) \leq 3$ for all $y$. We pick $\zeta(y)$ such that $\zeta(y) = \alpha$ with probability $\frac{f(y, \alpha)}{r(y)}$ independently for all $y$ (If $r(y) = 0$, pick $\zeta(y)=0$). Then, 
        \begin{align}
            \E[\zeta]{ L(\zeta)}
            &= \E[\zeta]{\P[y_1, y_2]{f(y_1, \zeta(y_1)) \geq \delta , f(y_2, \zeta(y_2)) \geq \delta, f(y_1 + y_2, \zeta(y_1 + y_2)) \geq \delta, \zeta(y_1) + \zeta(y_2) = \zeta(y_1 + y_2)}}\\
            &=\E[y_1, y_2]{\P[\zeta]{f(y_1, \zeta(y_1)) \geq \delta , f(y_2, \zeta(y_2)) \geq \delta, f(y_1 + y_2, \zeta(y_1 + y_2)) \geq \delta, \zeta(y_1) + \zeta(y_2) = \zeta(y_1 + y_2)}}.
        \end{align}
        Fix $y_1, y_2$ such that $y_1, y_2, y_1 + y_2$ are distinct with $r(y_1), r(y_2), r(y_1 + y_2) > 0$. By denoting
        \begin{align}
            \Lambda_{y_1, y_2}\eqdef\set{(\alpha_1, \alpha_2): f(y_1, \alpha_1) \geq \delta, f(y_2, \alpha_2) \geq \delta, f(y_1 + y_2, \alpha_1 + \alpha_2) \geq \delta}
        \end{align}
        we can re-write
        \begin{align}
            &\P[\zeta]{f(y_1, \zeta(y_1)) \geq \delta , f(y_2, \zeta(y_2)) \geq \delta, f(y_1 + y_2, \zeta(y_1 + y_2)) \geq \delta, \zeta(y_1 + y_2) = \zeta(y_1 + y_2)} \\
            &= \sum_{(\alpha_1, \alpha_2) \in \Lambda_{y_1, y_2}} \P[\zeta]{\zeta(y_1) = \alpha_1, \zeta(y_2) = \alpha_2, \zeta(y_1 + y_2) = \alpha_1 + \alpha_2}\\
            &=\sum_{(\alpha_1, \alpha_2) \in \Lambda_{y_1, y_2}} \frac{f(y_1, \alpha_1) f(y_2, \alpha_2) f(y_1 + y_2, \alpha_2 + \alpha_2)}{r(y_1) r(y_2) r(y_1 + y_2)}\\
            &\stackrel{(a)}{\geq} \frac{1}{27} \sum_{(\alpha_1, \alpha_2) \in \Lambda_{y_1, y_2}} {f(y_1, \alpha_1) f(y_2, \alpha_2) f(y_1 + y_2, \alpha_2 + \alpha_2)}
        \end{align}
        where $(a)$ follows since $r(y) \leq 3$ for all $y$. When one of $r(y_1), r(y_2), r(y_1+y_2)$ is zero, the above inequality still holds as both sides are zero. Because the probability that $y_1, y_2, y_1 + y_2$ are not distinct is $O(2^{-n})$, we have
        \begin{align}
            \label{eq:l-bound1}
            \E[\zeta]{ L(\zeta)} \geq \frac{1}{27} \E[y_1, y_2]{  \sum_{(\alpha_1, \alpha_2) \in \Lambda_{y_1, y_2}} {f(y_1, \alpha_1) f(y_2, \alpha_2) f(y_1 + y_2, \alpha_2 + \alpha_2)}} - O(2^{-n}),
        \end{align}
        Furthermore, we can bound the contribution of terms $\notin \Lambda_{y_1, y_2}$. To see this, observe
        \begin{align}
            \E[y_1, y_2]{\sum_{\alpha_1, \alpha_2: f(y_1, \alpha_1) < \delta} f(y_1, \alpha_1) f(y_2, \alpha_2) f(y_1 + y_2, \alpha_2 + \alpha_2)}
            &< \delta \E[y_1, y_2]{\sum_{\alpha_1, \alpha_2}  f(y_2, \alpha_2) f(y_1 + y_2, \alpha_1 + \alpha_2)}\\
            &< \delta 
            \cdot (\frac{1}{N} \sum_{z \in \F^{2n}} f(z))^2\\
            &= \delta.
        \end{align}
        Using the same argument for $y_2$ and $y_1 + y_2$, we obtain that
        \begin{align}
            \label{eq:l-bound2}
            \E[y_1, y_2]{\sum_{(\alpha_1, \alpha_2)\notin \Lambda_{y_1, y_2}} f(y_1, \alpha_1) f(y_2, \alpha_2) f(y_1 + y_2, \alpha_2 + \alpha_2)} < 3\delta. 
        \end{align}
        Combining Eq.~\eqref{eq:l-bound1} and Eq.~\eqref{eq:l-bound2}, we obtain
        \begin{align}
            \label{eq:l-bound3}
            \E[\zeta] {L(\zeta)} 
            &\geq \frac{1}{27} \E[y_1, y_2\in \F^n]{  \sum_{(\alpha_1, \alpha_2)} {f(y_1, \alpha_1) f(y_2, \alpha_2) f(y_1 + y_2, \alpha_2 + \alpha_2)}} -\frac{1}{9}\delta - O(2^{-n}).
        \end{align}
        We next relate the right-hand side of the above inequality to $\gamma$. By Eq.~\eqref{eq:f-additive}, we have
        \begin{align}
            \E[y_1, y_2\in \F^n]{  \sum_{(\alpha_1, \alpha_2)} {f(y_1, \alpha_1) f(y_2, \alpha_2) f(y_1 + y_2, \alpha_2 + \alpha_2)}} = \frac{1}{N}\sum_{z\in\F^{2n}} f^3(z).
            \label{eq:L-and-f3}
        \end{align}
        Moreover, we have
        \begin{align}
            \gamma &\stackrel{(a)}{\leq} \frac{1}{N}\sum_{z\in \F^{2n}}f^2(z) \\
            &=\frac{1}{N}\sum_{z\in \F^{2n}}f^{1/2}(z) f^{3/2} (z)\\
            &\stackrel{(b)}{\leq} \frac{1}{N}\sqrt{\sum_{z\in \F^{2n}}f(z)}\sqrt{\sum_{z\in \F^{2n}}f^3(z)}\\
            & \stackrel{(c)}{=} \sqrt{\frac{1}{N}\sum_{z\in \F^{2n}}f^3(z)}.
        \end{align}
        where $(a)$ follows from Eq.~\eqref{eq:gower-char}, $(b)$ follows from Cauchy-Schwartz inequality, and $(c)$ follows from Eq.~\eqref{eq:parseval-f}. Substituting this bound in \eqref{eq:l-bound3} and using \eqref{eq:L-and-f3}, we obtain
        \begin{align}
            \E[\zeta]{ L(\zeta)} \geq \frac{1}{27} \gamma^2 - \frac{1}{9}\delta - O(2^{-n}).
        \end{align}
        Using the definition of $\delta$ and $\epsilon$, we obtain $\E[\zeta]{ L(\zeta)} \geq \epsilon$.
    \end{proof}
    From here we set $\epsilon, \delta$ as in \emph{Claim~4} and $K_1, K_2$ as in Corollary~\ref{cor:additive}.
    \paragraph{Claim~5.}  Assuming the parameters in \emph{Claim~4}, there exists an ``affine'' linear map  $\ell: \F^n \to \F^n$ with 
    \begin{align}
        \sum_{z\in G(\ell)} f(z) \geq \delta \frac{\epsilon^{2K_2+2}}{K_1} N
    \end{align}
    where $G(\ell)\eqdef \set{(y, \ell(y)): y\in \F^n}$ is the graph of $\ell$.
    \begin{proof}
        Take $\zeta$ from \emph{Claim~4}. Let $S = \set{(y, \zeta(y)): f(y, \zeta(y)) \geq \delta}$. By \emph{Claim~4}, 
        \begin{align}
            \P[z_1, z_2\in S]{z_1 + z_2\in S} \geq \P[y_1, y_2 \in \F^n]{(y_1, \zeta(y_1)), (y_2, \zeta(y_2)), (y_1 + y_2, \zeta(y_1) + \zeta(y_2)) \in S} \geq \epsilon
        \end{align}
        By Corollary~\ref{cor:additive}, there exists an affine subspace $V\subset \F^{2n}$ with $\abs{S\cap V} \geq \frac{\epsilon^{K_2}}{K_1}\abs{S}$ for absolute constants $K_1, K_2 > 0$ and $\abs{V} \leq \abs{S} \leq N$. Let $U = \set{y: (y, \alpha) \in V}$ be the projection of $V$ onto first $n$ entries. Since $S$ has at most one element $(y, y')$ for each $y$ and $\abs{V \cap S} \geq \frac{\epsilon^{K_2}}{K_1} \abs{S}$, we have $\abs{U} \geq \frac{\epsilon^{K_2}}{K_1} \abs{S}$. By Lemma~\ref{lem:linear-map}, there exists affine linear map $\ell:\F^n\to \F^n$ with $\abs{G(\ell) \cap S} \geq \frac{\abs{ V\cap S} \abs{U}}{\abs{V}}$. Using bounds  $\card{V\cap S} \geq \frac{\epsilon^{K_2}}{K_1} \card{S}$, $\card{U} \geq \frac{\epsilon^{K_2}}{K_1} \card{S}$ and $\card{V} \leq N$. We obtain that
        \begin{align}
            \card{G(\ell) \cap S} \geq \frac{\epsilon^{2K_2} \card{S}^2}{K_1^2N}.
        \end{align}
        Since for all $z\in S$, we have $f(z) \geq \delta$ and $\card{S} \geq \epsilon N$,
        \begin{align}
            \sum_{z\in G(\ell)} f(z) \geq \sum _{z\in G(\ell) \cap S} f(z) \geq \delta \card{G(\ell) \cap S} \geq \delta\frac{\epsilon^{2K_2+2}}{K_1^2} N
        \end{align}
    \end{proof}
    \paragraph{Claim~6.} Assuming the parameters of \emph{Claim~4}, there exists a linear map  $\ell: \F^n \to \F^n$ with 
    \begin{align}
        \sum_{z\in G(\ell)} f(z) \geq \delta \frac{\epsilon^{2K_2+2}}{K_1^2} N.
    \end{align}
    \begin{proof}
        Take the \emph{affine} linear map $\ell$ from \emph{Claim~5}. By definition, we can write $\ell = \ell_0 + b$ where $\ell_0:\F^n \to \F^n$ is a linear map and $b\in \F^n$ is fixed. Note that $G(\ell) = G(\ell_0) + (0, b)$ and $G({\ell_0})$ is a subspace. Using Lemma~\ref{lem:f-sum-ineq}, we have
        \begin{align}
            \sum_{z\in G({\ell_0})} f(z) \geq \sum_{z\in G(\ell)} f(z), 
        \end{align}
        which implies that  \emph{Claim~5} holds for $\ell_0$ too.
    \end{proof}
   We now state three additional claims, whose proofs follow the same structure as those in \cite[Proof of Theorem~4.5]{hatami2019higher}. For completeness, we include these proofs in Appendix~\ref{sec:proof-claims}, adapted to our notation, and verify that the arguments hold even when our state is not a phase state.
    \paragraph{Claim~7.} Assuming the parameters of \emph{Claim~4}, there exists a \emph{symmetric} linear map $\ell:\F^n\to\F^n$ with 
    \begin{align}
        \sum_{z\in G(\ell)} f(z) \geq \delta^2 \frac{\epsilon^{4K_2+4}}{K_1^4} N
    \end{align}
    \paragraph{Claim~8.}Assuming the parameters of \emph{Claim~4}, there exists a \emph{symmetric} linear map $\ell:\F^n\to\F^n$ with  \emph{zero diagonal} and 
    \begin{align}
        \sum_{z\in G(\ell)} f(z) \geq \delta^2 \frac{\epsilon^{4K_2+4}}{K_1^4} N
    \end{align}
    \paragraph{Claim~9.} Assuming the parameters of \emph{Claim~4}, there exists a quadratic polynomial $q:\F^n \to \F$ such that 
    \begin{align}
        \abs{\E[x]{g(x) (-1)^{q(x)}}} \geq \delta \frac{\epsilon^{2K_2+2}}{K_1^2}
    \end{align}

We are now ready to present the proof of \emph{Claim~1.}

    \begin{proof}[Proof of Claim~1]
        We define the stabilizer state $\ket{s} = \frac{1}{\sqrt{N}} \sum_{x\in \F^n}(-1)^{q(x)}\ket{x}$ where $q$ is the quadratic map from \emph{Claim~9}. $\ket{s}$ has real coefficients in the computational basis. Additionally,
        \begin{align}
            \abs{\braket{s}{\phi}} = \frac{1}{N} \abs{ \sum_{x\in \F^n} (-1)^{q(x)} g(x) } \geq \delta \frac{\epsilon^{2K_2+2}}{K_1^2}.
        \end{align}
        Furthermore, we can choose $C_1' = 6 \times K_1^2 \times (54 + o(1))^{2K_2+2}$ and $C_2'=4K_2 + 6$.
    \end{proof}
\begin{remark}
\label{rem:constants}
    One can take in Theorem~\ref{th:main1}  $C_1 = 6  \times K_1^2 \times (54 + o(1))^{2K_2+2}\times 2^{32K_2 + 48}$ and $C_2 = 4K_2 + 6$ where $K_1, K_2$ are the constants from Corollary~\ref{cor:additive}. We did not attempt to optimize these constants in our proof. Furthermore, while  the constants $C_1$ and $C_2$ are very large ($\approx 6.85 \times 10^{1087}$  and $298$, respectively) for the present values of $K_1$ and $K_2$, we expect that future improvements on the bounds for $K_1$ and $K_2$ will reduce these constants.
\end{remark}
%\section{Relation between various measures}

\section{Main result 2: Relating stabilizer rank to fidelity}
\label{sec:proof2}

We prove our second main result, Theorem~\ref{th:main2} (re-stated below), in this section.

\begin{theorem*}[Restatement of Theorem~\ref{th:main2}]
    For each $k>0$ there exists $\delta_k > 0$ such that the following hold. Let $\ket{\phi}$ be a quantum state with $\chi(\phi) \leq k$. Then, $F(\phi) \geq \delta_k$, where $\delta_k$ is a number that only depends on $k$.
\end{theorem*}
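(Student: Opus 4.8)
The plan is to reduce the statement to a compactness argument about Gram matrices of tuples of stabilizer states, following the sketch in \cref{sec:intro-proof-ideas}. First I would observe that if $\ket\phi$ lies in the span of stabilizer states $\ket{s_1},\dots,\ket{s_k}$, then writing $\ket\phi = \sum_i c_i \ket{s_i}$ and letting $G$ be the $k\times k$ Gram matrix with entries $G_{ij} = \braket{s_i}{s_j}$, we have $1 = \braket{\phi}{\phi} = c^\dagger G c$. The overlap of $\ket\phi$ with, say, $\ket{s_1}$ is $\braket{s_1}{\phi} = (Gc)_1 = e_1^\dagger G c$. So it suffices to lower bound $\max_i |(Gc)_i|$ in terms of the spectrum of $G$: if $\lambda_{\min}$ denotes the smallest nonzero eigenvalue of the positive semidefinite matrix $G$, then $\|Gc\|^2 = c^\dagger G^2 c \geq \lambda_{\min}\, c^\dagger G c = \lambda_{\min}$ (using that $c$ can be taken orthogonal to $\ker G$), hence $\max_i |(Gc)_i|^2 \geq \lambda_{\min}/k$, which gives $F(\phi) \geq \lambda_{\min}/k$. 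Therefore the theorem reduces to the claim: \emph{there is a constant $\mu_k > 0$ such that every $k\times k$ Gram matrix of $k$ stabilizer states (over any number of qubits) has smallest nonzero eigenvalue either $0$ or at least $\mu_k$}; then $\delta_k = \mu_k/k$ works.

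Next I would prove this gap claim by contradiction. Suppose there is a sequence of tuples $(\ket{s_1^{(m)}},\dots,\ket{s_k^{(m)}})$, with $\ket{s_i^{(m)}}$ living on $n_m$ qubits, whose Gram matrices $G^{(m)}$ have smallest nonzero eigenvalue $\lambda_m \to 0$ with $\lambda_m > 0$. Each $G^{(m)}$ is a $k\times k$ Hermitian contraction (entries bounded by $1$), so by compactness of the set of such matrices we may pass to a subsequence along which $G^{(m)} \to G^\infty$. The crucial point—and the main obstacle—is to show that the \emph{values} $\braket{s_i^{(m)}}{s_j^{(m)}}$ are not arbitrary complex numbers of modulus $\le 1$ but are drawn from a discrete-in-a-suitable-sense set, so that a convergent subsequence must be eventually constant, forcing $\lambda_m = \lambda_\infty > 0$ for large $m$, a contradiction. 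Concretely, the overlap of two stabilizer states has the well-known form $\braket{s}{t} = 2^{-r/2} \cdot \omega$ where $r$ is a nonnegative integer (related to the dimensions/intersection of the associated affine subspaces) and $\omega$ is an $8$th root of unity (or $0$); this follows from the canonical form $\frac{1}{\sqrt{|A|}}\sum_{x\in A} i^{\ell(x)}(-1)^{Q(x)}\ket{x}$ quoted in \cref{sec:stab-form} by a direct Gauss-sum computation. Hence each entry of $G^{(m)}$ takes values in the set $\{0\} \cup \{2^{-r/2}\omega : r \in \Z_{\geq 0},\ \omega^8 = 1\}$, whose only accumulation point is $0$.

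With this discreteness in hand, the argument concludes as follows: along the convergent subsequence, for each pair $(i,j)$ either $G^{(m)}_{ij} \to 0$, or $G^{(m)}_{ij}$ is eventually equal to a fixed nonzero value $2^{-r_{ij}/2}\omega_{ij}$. Partition the index pairs accordingly. One then needs to argue that the "vanishing in the limit" entries are actually eventually $0$, not merely small—this requires a little more care, since a priori $2^{-r_m/2}$ with $r_m\to\infty$ tends to $0$ without being $0$. Here I would use that $\lambda_m > 0$ means the $k$ states are linearly independent for each $m$, together with a bound relating how small a nonzero eigenvalue of a $k\times k$ PSD matrix with entries in this value set can be: since the entries are $2^{-r/2}$ times roots of unity, clearing denominators shows $2^{R}\det G^{(m)}$ is an algebraic integer (indeed, after scaling, lies in $\Z[\zeta_8]$ and is a nonzero rational integer times a unit) of bounded "size", so $\det G^{(m)}$ is bounded below by $2^{-R_k}$ for $R_k$ depending only on $k$ once we know the entries have bounded $r$; and entries with $r$ too large can be absorbed. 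Combined with $\|G^{(m)}\| \le k$, this yields $\lambda_m \ge \lambda_{\min}(G^{(m)}) \ge \det G^{(m)} / k^{k-1} \ge \mu_k$. I expect the genuinely delicate step to be this last one—turning the "only accumulation point is $0$" observation into a \emph{uniform} (in $n_m$) lower bound on the nonzero eigenvalue—and I would handle it either via the algebraic-integer/determinant bound just sketched or, alternatively, by a more structural argument showing that if $\lambda_m\to 0$ then some entry $G^{(m)}_{ij}$ with large $r$ can be replaced by $0$ while keeping the Gram matrix PSD of the same rank, contradicting minimality; the determinant route seems cleaner and is the one I would pursue first.
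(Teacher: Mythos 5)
Your reduction of the theorem to a spectral gap for Gram matrices of $k$ stabilizer states is essentially identical to the paper's (the paper writes $1=\phi^\dagger A G^{-1}A^\dagger\phi\leq\norm{G^{-1}}\norm{A^\dagger\phi}^2$ with $A^\dagger\phi=Gc$, which is your computation in different clothing), and your compactness-plus-discreteness setup for the gap claim, including the observation that stabilizer overlaps lie in a set whose only accumulation point is $0$, matches the paper's Lemma~\ref{lm:lambda-min} and Proposition~\ref{prop:analysis}. The problem is the step you yourself flag as delicate: neither of your two proposed routes for handling entries that are nonzero but tend to zero actually works as described, and this is precisely where the real content of the proof lies. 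The determinant/algebraic-integer route fails because the lower bound it produces is not uniform: writing the entries as $2^{-r_{ij}/2}\omega_{ij}$ and clearing denominators, the norm bound gives $\abs{\det G^{(m)}}\geq 2^{-cR_m}$ where $R_m$ depends on the exponents $r_{ij}$ appearing in $G^{(m)}$ --- and in the problematic sequences these exponents are exactly what go to infinity. Your parenthetical ``entries with $r$ too large can be absorbed'' is the entire difficulty restated, not a solution: a priori one cannot replace a tiny nonzero entry by $0$ and still have a Gram matrix of stabilizer states, nor is it clear what ``minimality'' would be contradicted in your alternative route.

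The paper closes this gap with a structural argument you do not have. It builds a graph on $[k]$ whose edges are the pairs whose overlaps are eventually equal to a fixed nonzero value, and shows that two vertices joined by a path cannot have an overlap that is nonzero yet tends to zero. The mechanism is a normalization plus an induction along the path: after applying a Clifford so that the first state on the path is $\ket{0^n}$, the bound $\abs{\braket{s_n^j}{s_n^{j-1}}}\leq\sqrt{\abs{A_n^{j-1}}/\abs{A_n^j}}$ forces the affine supports $A_n^j$ of all states in a connected component to have size bounded by a constant determined by the (fixed, nonzero) overlaps along the path; a nonzero overlap of two stabilizer states with bounded supports is bounded below, so it cannot vanish in the limit. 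Consequently the limit $G^*$ is block diagonal over the connected components, each block coincides with a principal submatrix of the positive definite $G_n$, and $\lambda_{\min}(G^*)>0$, contradicting $\lambda_{\min}(G_n)\to 0$ via Weyl's inequality. Without this (or a correct substitute for it), your proof is incomplete at its crux.
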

    For each positive integer $k$, we first introduce a set $\mathcal{G}_k$ of $k\times k$ matrices, which can be a Gram matrix of $k$ stabilizer states. More formally,
    \begin{align}
        \mathcal{G}_k \eqdef \set{G\in \C^{k\times k}: \exists n \text{ and } \ket{s_1}, \cdots \ket{s_k} \in \Stab_n: G_{ij} = \braket{s_i}{s_j}}
    \end{align}
    We also define
    \begin{align}
        \lambda^{*}_k \eqdef \inf_{G\in\mathcal{G}_k: \lambda_{\min}(G) \neq 0} \lambda_{\min}(G)
    \end{align}
    which is the infimum of the minimum eigenvalue of non-singular matrices in $\mathcal{G}$. The heart of our proof is the following lemma.
    \begin{lemma}
        \label{lm:lambda-min}
        For each $k$, we have $\lambda^{*}_k > 0$.
    \end{lemma}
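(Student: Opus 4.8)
The plan is to argue by contradiction using a compactness argument, extracting a convergent subsequence of Gram matrices witnessing $\lambda_k^* = 0$, and then showing that the structure of stabilizer states forces the sequence to be eventually constant, contradicting that the limiting matrix is singular while all sequence elements are non-singular. Suppose $\lambda_k^* = 0$. Then there is a sequence of Gram matrices $G^{(m)} \in \mathcal{G}_k$ with $\lambda_{\min}(G^{(m)}) \neq 0$ and $\lambda_{\min}(G^{(m)}) \to 0$. Each $G^{(m)}$ arises from $k$ stabilizer states $\ket{s_1^{(m)}}, \dots, \ket{s_k^{(m)}}$ on $n_m$ qubits. Since every entry $G^{(m)}_{ij} = \braket{s_i^{(m)}}{s_j^{(m)}}$ lies in the closed unit disk, the sequence $(G^{(m)})_m$ lives in a compact subset of $\C^{k\times k}$; pass to a subsequence so that $G^{(m)} \to G^*$. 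By continuity of eigenvalues, $\lambda_{\min}(G^*) = 0$, so $G^*$ is singular. The goal is to derive a contradiction by showing that the entries of $G^{(m)}$ are eventually constant, so that $G^{(m)} = G^*$ for large $m$, contradicting $\lambda_{\min}(G^{(m)}) \neq 0$.

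The key is that overlaps of stabilizer states take a very restricted set of values. Concretely, for two stabilizer states $\ket{s}, \ket{t}$ on $n$ qubits one has $\braket{s}{t} = 0$ or $\abs{\braket{s}{t}}^2 = 2^{-\ell}$ for some integer $\ell \geq 0$ with $0 \le \ell \le n$, and moreover the phase of $\braket{s}{t}$ (when nonzero) is a power of $i$ (or of $e^{i\pi/4}$), using the explicit form $\frac{1}{\sqrt{|A|}}\sum_{x\in A} i^{\ell(x)}(-1)^{Q(x)}\ket{x}$ from \cref{sec:stab-form}: the inner product is a Gauss-type sum over an affine subspace, which evaluates to $0$ or to $2^{-\ell/2}$ times an eighth root of unity. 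Hence each entry $G^{(m)}_{ij}$ lies in the countable, \emph{discrete-away-from-0} set $\mathcal{S} = \{0\} \cup \{\, 2^{-\ell/2}\,\omega : \ell \in \Z_{\geq 0},\ \omega^8 = 1\,\}$. The only accumulation point of $\mathcal{S}$ is $0$. Therefore, for each pair $(i,j)$, either $G^*_{ij} = 0$, or $G^{(m)}_{ij}$ is eventually equal to the fixed nonzero value $G^*_{ij}$ (since a sequence in $\mathcal{S}$ converging to a nonzero limit must be eventually constant).

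This does not immediately finish the argument, because entries converging to $0$ could keep changing. To handle this, I would split into the pairs $P_0 = \{(i,j): G^*_{ij} = 0\}$ and its complement, and observe that for $(i,j) \notin P_0$ the entries are eventually frozen; then discard the finitely many early terms so that all non-$P_0$ entries are constant. The remaining issue is the $P_0$ entries, where $G^{(m)}_{ij} \to 0$ but $G^{(m)}_{ij}$ could be a nonzero element of $\mathcal{S}$ of larger and larger $\ell$. Here I expect to use a more refined structural fact: the rank (or more precisely the full eigenvalue profile) of a Gram matrix of stabilizer states is controlled by combinatorial data — the affine subspaces and quadratic forms — and one can argue that among the $k$ states, after a unitary normalization one may assume the states live in a bounded-dimensional space, or that the $P_0$-entries being small forces an approximate block-diagonal structure whose singular behavior is already captured by a genuinely singular $G \in \mathcal{G}_k$ of the same size. \textbf{The main obstacle is exactly this last point:} ruling out that a sequence of non-singular Gram matrices degenerates to a singular one purely through the $P_0$-entries shrinking to zero. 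I would resolve it by a dimension-reduction / normal-form argument for tuples of stabilizer states (reducing $n_m$ to a function of $k$ only, using Clifford operations and the coset structure of the stabilizer groups), so that only finitely many Gram matrices occur in $\mathcal{G}_k$ up to the relevant equivalence, making $\mathcal{G}_k$ effectively finite and $\lambda_k^*$ a minimum over a finite set, hence positive.
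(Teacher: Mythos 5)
Your setup---the contradiction argument, compactness of $\mathcal{G}_k\subset\C^{k\times k}$, passing to a convergent subsequence $G^{(m)}\to G^*$, and the observation that stabilizer overlaps lie in a set whose only accumulation point is $0$, so that entries with nonzero limit are eventually frozen---is exactly the paper's strategy (the discreteness statement is isolated there as Proposition~\ref{prop:analysis}). You also correctly name the crux: entries that tend to $0$ through nonzero values could in principle drive $\lambda_{\min}(G^{(m)})\to 0$ while every $G^{(m)}$ stays non-singular. But you do not resolve this crux, and the resolution you sketch rests on a false premise. $\mathcal{G}_k$ is \emph{not} effectively finite and no dimension reduction confines a $k$-tuple of stabilizer states to a number of qubits depending only on $k$: already for $k=2$ the pairs $\ket{0}^{\otimes n},\ket{+}^{\otimes n}$ produce infinitely many distinct non-singular Gram matrices with off-diagonal entry $2^{-n/2}$, so $\lambda_k^*$ is genuinely an infimum over an infinite set and the lemma cannot be reduced to a minimum over finitely many matrices.

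The paper closes the gap with a different structural argument. Build a graph on $[k]$ whose edges are the pairs whose entries are frozen at a nonzero value; the claim is that an entry that is nonzero for all $m$ yet tends to $0$ cannot join two vertices of the same connected component. The mechanism is a support bound: anchoring one endpoint of a path at $\ket{0\cdots 0}$ by a Clifford and writing each state along the path as a uniform superposition over an affine subspace $A^j_m$, the frozen overlaps force $|A^{j}_m|\leq |A^{j-1}_m|/|G^*_{i_j i_{j-1}}|^2$, so all supports within a component are bounded by a constant depending only on $G^*$, and hence every nonzero overlap inside a component is bounded below---ruling out the problematic case there. Consequently $G^*$ is block diagonal with respect to the components, each diagonal block coincides \emph{exactly} with the corresponding principal submatrix of the non-singular $G^{(m)}$, so $\lambda_{\min}(G^*)>0$, and Weyl's inequality then contradicts $\lambda_{\min}(G^{(m)})\to 0$. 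Your passing mention of an ``approximate block-diagonal structure'' points in the right direction, but without the support bound you cannot exclude that the small off-block entries conspire to make $G^{(m)}$ nearly singular; what saves the argument is the exact block structure of the limit $G^*$ (not of $G^{(m)}$) together with non-singularity of principal submatrices. As written, your proposal has a genuine unfilled gap at its central step.
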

    Before proving the above lemma, we use it to conclude our proof. 
    \begin{proof}[Proof of Theorem~\ref{th:main2}]
    
    Fix $k$ and a unit vector $\ket{\phi} \in \C^N$ for $N=2^n$ with $r = \chi(\phi) \leq k$. Thus, there exists $\ket{s_1}, \cdots, \ket{s_r}\in \Stab_n$ and $c_1, \cdots, c_r\in \C$ with $\ket{\phi} = \sum_{i=1}^{r}c_i \ket{s_i}$. To write this equality in matrix form, we introduce the following notation
    \begin{enumerate}
        \item $A\in \C^{N\times r}$ is $N\times r$ matrix whose $i^{th}$ column is representation of $\ket{s_i}$ in computational basis,
        \item $c\in \C^r$ is $(c_1, \cdots, c_r)$
        \item $\phi \in \C^N$ is the vector representation of $\ket{\phi}$ in computational basis
    \end{enumerate}
    Then, $\ket{\phi} = \sum_{i=1}^{r}c_i \ket{s_i}$ is equivalent to $Ac=\phi$. 
    Let $G$ be the Gram matrix of $\ket{s_1}, \cdots, \ket{s_r}$. We have $G=A^\dagger A$ and because $\ket{s_1}, \cdots, \ket{s_r}$ are linearly independent, $G$ is non-singular. Applying $G^{-1}A^\dagger$ to both sides of $Ac = \phi$, we obtain that 
    \begin{align}
    \label{eq:cginv}
        c = G^{-1}A^\dagger \phi
    \end{align}
    Furthermore, since $\ket{\phi}$ is unit vector, we have
    \begin{align}
        1 
        &= \phi^\dagger \phi \\
        &= \phi^\dagger A c \\
        &\stackrel{(a)}{=} \phi^\dagger A G^{-1} A^\dagger \phi\\
        &\leq \norm{G^{-1}} \norm{A^\dagger \phi}^2,\label{eq:unit}
    \end{align}
    where $(a)$ follow from Eq.~\eqref{eq:cginv}. $G\in \mathcal{G}_r$ and is non-singular. Hence, by Lemma~\ref{lm:lambda-min}, $\lambda_{\min}(G) \geq \lambda_r^*$ and $\norm{G^{-1}} \leq \frac{1}{\lambda_r^*}$. Combining it with Eq.~\eqref{eq:unit}, we obtain that $\norm{A^\dagger \phi} \geq \sqrt{\lambda_r^*}$. By Cauchy-Schwartz, we have $\norm{A^\dagger \phi}_\infty \geq \sqrt{\frac{{\lambda_r^*}}{k}}$. Note that the $i^{th}$ element of $A^\dagger \phi$ is $\braket{s_i}{\phi}$. Therefore, there exists $i$ such that $F(\phi) \geq \abs{\braket{s_i}{\phi}} \geq \sqrt{\frac{{\lambda_r^*}}{k}}$. We thus conclude that the claim of the theorem holds for 
    \begin{align}
        \delta = \min_{i\in[k]}\sqrt{\frac{{\lambda_i}}{k}}
    \end{align}
    \end{proof}

    We now prove Lemma~\ref{lm:lambda-min}
    \begin{proof}[Proof of Lemma~\ref{lm:lambda-min}]
        For the sake of contradiction, take a sequence $\set{G_n}_{n\geq 1}$ in $\mathcal{G}_k$ such that $G_n$ is non-singular for all $n$ and $\lim_{n\to\infty}\lambda_{\min}(G_n) = 0$. First, without loss of generality, we can assume that there exists $G^* \in \C^{k\times k}$ (not necessarily in $\mathcal{G}_k$) such that $\lim_{n\to\infty} G_n = G^*$. This is because for any $G\in\mathcal{G}_k$ and all $i,j\in[k]$, $\abs{G_{ij}} \leq 1$ and therefore, $\mathcal{G}_k$ is a subset of a compact set of all matrices. Consequently, any sequence in $\mathcal{G}_k$ has a convergent sub-sequence whose limit might be outside of $\mathcal{G}_k$. If $\set{G_n}_{n\geq 1}$ is not convergent, then we can replace it with one of its convergent sub-sequences, and it still satisfies two conditions of  $\set{G_n}_{n\geq 1}$. Thus, from now we assume that $\lim_{n\to\infty}(G_n)_{ij} = G^*_{ij}$ for all $i,j\in[k]$. If $\lambda_{\min}(G^*) > 0$, by Weyl inequality
         $\lambda_{\min}(G_n) \geq \lambda_{\min}(G^*) - \norm{G^* - G_n}$ which tends to $\lambda_{\min}(G^*) > 0$. This contradicts our initial assumption that $\lambda_{\min}(G_n)$ vanishes. Thus, it is enough to show that $\lambda_{\min}(G^*) > 0$.
        
        Note that $(G_n)_{ij}$ is an inner product of two stabilizer states. We state an analytical statement about the limit of such sequences in the following proposition.
        \begin{prop}
        \label{prop:analysis}
            Define
            \begin{align}
                \mathcal{I} = \set{x\in\C: \exists n \text{ and } \ket{s}, \ket{s'}\in \Stab_n x = \braket{s}{s'}}
            \end{align}
            Let $\set{c_n}_{n\geq 1}$ be a sequence in $\mathcal{I}$. If we have $\lim_{n\to\infty} c_n = c$, then, we have one of the following conditions:
            \begin{enumerate}
                \item $c\neq 0$ and for large enough $n$, $c_n = c$
                \item $c = 0$ and there exists a subsequence of $\set{c_n}$ such that all of its elements are zero
                \item $c=0$ and there exists a sub subsequence of $\set{c_n}$ such that all of its elements are non-zero
            \end{enumerate}
        \end{prop}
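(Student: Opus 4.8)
The plan is to reduce Proposition~\ref{prop:analysis} to the single structural fact that the set $\mathcal{I}$ of possible inner products of stabilizer states has no accumulation point other than $0$ — equivalently, that for every $r>0$ the set $\set{x\in\mathcal{I}:\abs{x}\ge r}$ is finite. Granting this, the proposition is a routine point-set argument. If $c\neq 0$, then for all large $n$ the terms $c_n$ lie in $\mathcal{I}\cap\set{x:\abs{x-c}\le \abs{c}/2}\subseteq\set{x\in\mathcal{I}:\abs{x}\ge \abs{c}/2}$, a finite set; a sequence contained in a finite subset of $\C$ and converging to $c$ is eventually equal to $c$ (choose the tolerance below the minimum distance from $c$ to the other points of that finite set), which is case (1). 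If $c=0$, then either infinitely many $c_n$ equal $0$, giving the all-zero subsequence of case (2), or only finitely many do, so the tail of $\set{c_n}$ is the all-nonzero subsequence of case (3).

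So the real work is the discreteness of $\mathcal{I}$ away from the origin, and for this I would invoke the standard structure of stabilizer overlaps: for $\ket s,\ket{s'}\in\Stab_n$, either $\braket{s}{s'}=0$, or $\braket{s}{s'}=2^{-j/2}\omega^{m}$ for some integer $0\le j\le n$ and some $m\in\set{0,\dots,7}$, where $\omega=e^{i\pi/4}$. The magnitude part comes from writing the rank-one projector as $\ketbra{s'}{s'}=\frac{1}{2^n}\sum_{P\in S'}P$ over the stabilizer group $S'$, so $\abs{\braket{s}{s'}}^2=\frac{1}{2^n}\sum_{P\in S'}\bra{s}P\ket{s}$; each term lies in $\set{1,-1,0}$, the set $H=\set{P\in S':P\ket s=\pm\ket s}$ is a subgroup, and a nonzero overlap forces $P\ket s=\ket s$ for every $P\in H$ (otherwise the $+1$ and $-1$ contributions cancel in pairs), whence $\abs{\braket{s}{s'}}^2=\abs{H}/2^n=2^{-j}$. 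The phase part comes from the canonical form in Subsection~\ref{sec:stab-form}: writing the two states with amplitudes $\tfrac{1}{\sqrt{|A|}}i^{\ell(x)}(-1)^{Q(x)}$ and $\tfrac{1}{\sqrt{|A'|}}i^{\ell'(x)}(-1)^{Q'(x)}$ gives $\braket{s}{s'}=2^{-(d+d')/2}\,w$ with $w=\sum_{x\in A\cap A'}\zeta_x$ a sum of fourth roots of unity, hence a Gaussian integer $a+bi$; combined with the magnitude computation, $a^2+b^2=2^{\,d+d'-j}$ is a power of $2$, and since $2=-i(1+i)^2$ ramifies in $\Z[i]$, every Gaussian integer of $2$-power norm is a unit times $(1+i)^t$, so $w/\sqrt{a^2+b^2}=i^k\omega^t=\omega^m$ is a power of $\omega$. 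Therefore $\mathcal{I}\subseteq\set{0}\cup\set{2^{-j/2}\omega^m:j\in\Z_{\ge 0},\ m\in\set{0,\dots,7}}$, and this set is closed with $0$ as its only limit point, which is exactly what the topology argument needs.

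The main obstacle I anticipate is not mathematical depth but a convention issue that must be handled carefully: the argument needs $\Stab_n$ to consist of the phase-pinned representatives of Subsection~\ref{sec:stab-form}, since if arbitrary global phases were allowed then $\mathcal{I}$ would be all of $\set{0}\cup\set{2^{-j/2}e^{i\theta}:\theta\in\R}$ and the proposition would be false — so the proof must point to the explicit amplitude form as the place where the phases become eighth roots of unity. Once this is noted, and once the two standard facts about stabilizer overlaps are recorded (one could alternatively skip their derivations and simply cite them, keeping only the topological step), the remaining argument is the short reduction described above.
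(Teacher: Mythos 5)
Your proof is correct and follows essentially the same route as the paper's: both reduce the proposition to the finiteness of $\mathcal{I}$ outside any ball around the origin and then run the identical elementary topological argument (shrink $\epsilon$ below the distance from $c$ to the other finitely many points when $c\neq 0$; split into the two trivial subsequence cases when $c=0$). You additionally derive the structural fact about stabilizer overlaps that the paper merely asserts, and your version ($2^{-j/2}\omega^{m}$ with $\omega=e^{i\pi/4}$, for phase-pinned representatives) is in fact the more accurate one --- the paper's stated form $i^{\ell}2^{-m/2}$ misses overlaps such as $\tfrac{1}{2}(1+i)$, though this imprecision does not affect the finiteness conclusion either proof needs.
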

        We defer the proof of the above proposition to Appendix~\ref{apx:proof-prop}
        Since all entries of $G_n$ are in $\mathcal{I}$, Proposition~\ref{prop:analysis} implies that there exists a subsequence of $\set{G_n}_{n\geq 1}$ such as $\set{G_{n_k}}_{k\geq 1}$ such that for all $i,j\in[k]$ one of the following holds.
        \begin{enumerate}
            \item $(G_{n_k})_{ij} = G^*_{ij} \neq 0$ for all $k\geq 1$ 
            \item $(G_{n_k})_{ij} = G^*_{ij} = 0$ for all $k\geq 1$ 
            \item $\lim_{k\to \infty}(G_{n_k})_{ij} = G^*_{ij} = 0$ and $(G_{n_k})_{ij}\neq 0$ for all $k\geq 1$ 
        \end{enumerate}
        Without loss of generality we can set $\set{G_n}_{n\geq 1}$ to $\set{G_{n_k}}_{k\geq 1}$. So from now on, we assume that $\set{G_n}_{n\geq 1}$ satisfies the same condition. 

         We define a graph with vertex set $[k]$ and $i$ is connected to $j$ if they are in case 1. We claim that if there is a path between two vertices, then they \emph{cannot} be in case 3. To show this, consider a path $i_0, \cdots, i_\ell$ such that they are distinct vertices and $i_j$ is connected to $i_{j-1}$ for all $j\in[\ell]$. Consider the sequence of stabilizer state assigned to vertex $i_j$ (in the definition of $G_n$) and denote it by $\set{\ket{s_n^j}}_{n\geq 1}$. By applying a Clifford to all states, we can always assume that $\ket{s_n^0}$ is all zero state for all $n$. For all $j\in[\ell]$ and $n\geq 1$, we can write
         \begin{align}
             \ket{s_n^j} = \frac{1}{\sqrt{\abs{A_n^j}}}\sum_{x\in A_n^j} Q_n^j(x) \ket{x}
         \end{align}
         where $A_n^j$ is an affine subspace and $Q_n^j\in \mathcal{Q}$.
         We next use induction to show that $\abs{A_n^j} $ is bounded for all $j$. Note that
         \begin{align}
             \braket{s_n^j}{s_n^{j-1}} = \frac{1}{\sqrt{\abs{A_n^j}\abs{A_n^{j-1}}}} \sum_{x\in A_n^j \cap A_n^{j-1}} Q_n^j(x) Q_n^{j-1}(x)
         \end{align}
         Hence
         \begin{align}
             \abs{G^*{i_ji_{j-1}}} = \abs{\braket{s_n^j}{s_n^{j-1}}} \leq \frac{\abs{A_n^j \cap A_n^{j-1}}}{\sqrt{\abs{A_n^j}\abs{A_n^{j-1}}}} \leq
             \frac{\sqrt{\abs{A_n^{j-1}}}}{\sqrt{\abs{A_n^{j}}}}
         \end{align}
         which is equivalent to $\abs{A_n^{j}} \leq \frac{\abs{A_n^{j-1}}}{\abs{G^*{i_ji_{j-1}}}^2}$ as desired. We consider the connected components of the graph defined earlier $C_1, \cdots, C_r$, which is a partition of $[k]$. $G^*$ is a block matrix with respect to this partition because if $i,j\in[k]$ and $G^*_{i,j} \neq 0$, then there is an edge between $i$ and $j$ and therefore $i$ and $j$ belong to the same connected component. Furthermore, if we look at the block corresponding to $C_i$, it is the same as the corresponding block in $G_n$ for all $n$. This is because the only entries in $G^*$ that are not equal to $G_n$ are of case 3, and by our previous claim, they cannot exist within a connected component. Therefore, the $\lambda_{\min}(G^*)$ is the minimum of the minimum eigenvalue of all blocks, and they are non-zero because they are the principal submatrix of $G_n$ (for any $n$).  
    %For the sake of contradiction suppose there exist positive integer $k$ and a sequence of states $\set{\ket{\phi_n}}$ with $\chi(\ket{\phi_n})\leq k$ and $\lim_{n\to \infty} F(\ket{\phi})$
\end{proof}

\appendix

\section{Proof details}
\subsection{Proof of \eqref{eq:parseval-f}}
\label{a:basic-calculations}

The proof is based on the basic application of Parseval's identity
\begin{align}
\begin{split}
    \frac{1}{N} \sum_{z\in\F^{2n}} f(z) &= \frac{1}{N} \sum_{y,\alpha}|\widehat{\Delta_y g}(\alpha)|^2\\
    &= \frac{1}{N^2} \sum_{y,\alpha}|\Delta_y {g(\alpha)}|^2 \hspace{2cm} \text{(Parseval)}\\
    &= (\frac{1}{N^2} \sum_{y,\alpha} |g(y)|^2 |g(y+\alpha)|^2)\\
    &= (\E[x]{\abs{g(x)}^2})^2\\
    &= \braket{\phi}{\phi}^2\\
\end{split}
\end{align}

\subsection{Proof of Lemma~\ref{lem:linear-map}}
\label{sec:proof-linear-map}
    First, we prove that there exists affine linear map $L:\F^n\to \F^n$ such that   $(y, L(y)) \in V$ for all $y\in U$. Let $U = u_0 + U_0$ where $U_0$ is a linear subspace of $\F^n$. Let $u_1, \cdots, u_k$ be a basis for $U_0$. For each $u_i$ choose $u_i'\in \F^n$ such that $(u_i +u_0, u_i')\in V$. We define $L(u_i + u_0) = u_i'$ which can be extended to an  affine map from $\F^n \to \F^n$ such that $(u, L(u))\in V$ for all $u\in \F^n$. 

    We claim that $\abs{V \cap (z+G(L))}$ is either zero or $\card{U}$. $\abs{V \cap (z+G(L))}$ cannot be larger than $\card{U}$. So it is enough to show that if $V \cap (z+G(L))$ is non-empty, then it has at least $\card{U}$ elements. Suppose that $z'\in {V \cap (z+G(L))}$ where $z' = z + (u', L(u'))$ for some $u' \in U$. Then, for any $u\in U$, 
    \begin{align}
        z + (u, L(u)) 
        &= (z + (u, L(u))) + (z' + z + (u', L(u')))\\
        &= (u, L(u)) + z' + (u', L(u')).
    \end{align}
    Since $(u, L(u)), z', (u', L(u')) \in V$ and $V$ is affine, then $z + (u, L(u)) \in V$. Therefore, for each $u\in U$, there is a unique element in $V \cap (z+G(L))$ as claimed. Because $G(L) + z$ are disjoint for distinct $z$, there are at most $\card{V}/\card{U}$ distinct $z$ such that $V \cap (z + G(L))$ is non-empty. Furthermore, 
    \begin{align}
        \card{S \cap V} = \sum_{z: V \cap (z + G(L)) \neq 0} \abs{S \cap V \cap (z + G(L))}
    \end{align}
    Thus, there exists $z \in \F^{2n}$ such that $\abs{S \cap V \cap (z + G(L))} \geq \card{S \cap V} \frac{\card{U}}{\card{V}}$. If $z = (z_1, z_2)$, then $z + G(L)$ is the graph of affine map $y \mapsto L(y+z_1) + z_2$, for which the desired result holds.

\subsection{Proof of Claims~7-9 in Section~\ref{sec:proof1}}
\label{sec:proof-claims}
We introduce some preliminary results in the beginning and then prove each claim. First, as in \cite{hatami2019higher}, we show that when $g$ is a real valued function $g : \F^{2n} \rightarrow \R$, then $f(y, \alpha) = 0$ if $\langle y, \alpha \rangle = 1$. To see this note that
\begin{align}
    \widehat{\Delta_y g}(\alpha) 
    &= \E[x]{g(x) \overline{g(x+y)} (-1)^{\langle x, \alpha \rangle}}\\
    &=\E[x]{g(x) {g(x+y)} (-1)^{\langle x, \alpha \rangle}}\label{eq:real}\\
    &= - \E[x]{g(x) {g(x+y)} (-1)^{\langle x+y, \alpha \rangle}}\\
    &= -\widehat{\Delta_y g}(\alpha).
\end{align}
(Note for arbitrary functions the above calculations show that the real part of corresponding entries will be zero). For a  linear map $\ell$, we define two functions from $\F^n \to \R$ as
\begin{align}
    R: y &\mapsto f(y, \ell(y)) \label{eq:r-def}\\
    H: y &\mapsto (-1)^{\langle y, \ell(y) \rangle} \label{eq:h-def}
\end{align}
First of all, since $f(y, \ell(y)) = 0$ when $H(y) = -1$, we have $R(y)H(y) = R(y)$ for all $y \in \F^n$. Next we prove a relation between the Fourier transform of $R$ and $\sum_{z\in G(L)} f(z)$ for appropriate affine function $L$. Note that that 
\begin{align}
    \widehat{R}(\alpha) 
    &= \E[y]{f(y, \ell(y))(-1)^{\langle y, \alpha\rangle}}\\
    &\stackrel{(a)}{=} \frac{1}{N} \sum_{y', \alpha'} f(y', \alpha')\E[y]{(-1)^{[(y', \alpha'), (y, \ell(y))]} (-1)^{\langle y, \alpha\rangle}}\\
    &=  \frac{1}{N} \sum_{y', \alpha'} f(y', \alpha')\E[y]{(-1)^{\langle y, \ell^t(y') + \alpha' + \alpha \rangle }}\\
    &= \frac{1}{N} \sum_{y', \alpha'} f(y', \alpha')\indic{\ell^t(y') + \alpha = \alpha'}\\
    &= \frac{1}{N} \sum_{y'} f(y', \ell^t(y') + \alpha)\\
    &= \E[y']{f(y', \ell^t(y') + \alpha)}.
\end{align}
For (a), we have used Lemma~\ref{lem:f-sum}. 
\begin{proof} [Proof of \emph{Claim~7}] We show that for any linear map $\ell:\F^n \to \F^n$, if 
\begin{align}
    \sum_{z\in G(\ell)} f(z) = \eta N,
\end{align}
then there exists a symmetric linear function $\ell':\F^n\to \F^n$ such that 
\begin{align}
    \sum_{z\in G(\ell')} f(z) \geq \eta^2 N.
\end{align}
Then, taking $\ell$ to be the linear function from \emph{Claim~6} completes the proof. Note that
\begin{align}
    \eta = \E[y\in \F^n]{R(y)} = \E[y]{H(y) R(y)} = \sum_\alpha \widehat{H}(\alpha)\widehat{R}(\alpha)  
\end{align}
Furthermore, for all $\alpha\in \F^n$, we have $\widehat{R}(\alpha) \geq 0$ and 
\begin{align}
    \sum_{\alpha\in \F^n} \widehat{R}(\alpha) = \sum_{\alpha} \E[y']{f(y', \ell^t(y') + \alpha)} = 1,
\end{align}
 Hence, $\widehat{R}(\alpha)$ forms a probability distribution over $\F^n$ and by Jenson's inequality
\begin{align}
    \sum_\alpha \widehat{H}^2(\alpha)\widehat{R}(\alpha) \geq \pr*{\sum_{\alpha\in\F^n} \widehat{H}(\alpha)\widehat{R}(\alpha) } ^2 = \eta ^2.
\end{align}
In addition, note that
\begin{align}
    \sum_\alpha \widehat{H}^2(\alpha)\widehat{R}(\alpha) = \E[y]{(H*H)(y) R(y)} = \E[y]{H(y) R(y) \indic{\ell(y) = \ell^t (y)}} = \E[y]{R(y)\indic{\ell(y) = \ell^t (y)}}
\end{align}
Let $\ell'$ be a symmetric linear map such that $\ell'(y) = \ell(y)$ on $\set{y: \ell(y) = \ell^t(y)}$ and arbitrary elsewhere. Then, we have
\begin{align}
    \sum_{y} f(y, \ell'(y)) \geq \sum_{y: \ell(y) = \ell^t(y)} f(y, \ell(y)) \geq N \eta^2,
\end{align}
as desired.
\end{proof}

\begin{proof} [Proof of \emph{Claim~8}]
    Take $\ell$ from \emph{Claim~7}. Let $v$ be the diagonal of $\ell$, i.e., $v_i = \langle e_i, \ell(e_i)\rangle$. Then, we can write $\ell' = \ell + v\langle v, \cdot \rangle$ where $\ell'$ has zero diagonal. Then, $\ell'(y) = \ell(y) + v \langle y, v \rangle $. If $\langle y, v \rangle = 0$, then $\ell(y) = \ell'(y)$. If  $\langle y, v \rangle = 1$, then $\langle y, \ell(y) \rangle = \langle y, v \rangle  = 1$, and thus, $f(y, \ell(y)) = 0$. Thus, for all $y$, $f(y, \ell(y)) \leq f(y, \ell'(y))$ and
    \begin{align}
        \sum_{y\in \F^n} f(y, \ell(y)) \leq \sum_{y\in \F^n}f(y, \ell'(y)).
    \end{align}
    which completes the proof.
\end{proof}
\begin{proof}[Proof of \emph{Claim~9}]
    Take $\ell$ from \emph{Claim~8} and define $H$ as in Eq.~\eqref{eq:h-def}. Note that by Eq.~\eqref{eq:fourier4}
    \begin{align}
        \sum_\alpha \widehat{Hg}(\alpha)^4 
        &= \E[y]{\pr{\E[x]{\Delta_y H(x) \Delta_y g (x)}}^2}\\
        &= \E[y]{\pr{\E[x]{(-1)^{\langle x, \ell (y)\rangle} \Delta_y g (x)}}^2}\\
        &= \frac{1}{N}\sum_y f(y, \ell(y))\\
        &\geq \delta^2 \frac{\epsilon^{4K_2+4}}{K_1^4}.
    \end{align}
    We also have 
    \begin{align}
        \sum_\alpha \widehat{Hg}(\alpha)^4 \leq \max_{\alpha}  \abs{\widehat{Hg}(\alpha)  }^2 \sum_{\alpha} \widehat{Hg}(\alpha)^2 = \max_{\alpha}  \abs{\widehat{Hg}(\alpha)  }^2.
    \end{align}
    As a result, there exists $\alpha \in \F^n$ such that
    \begin{align}
         \delta \frac{\epsilon^{2K_2+2}}{K_1^2} \leq \abs{\widehat{Hg}(\alpha)} = \abs{\E[x]{g(x) H(x) (-1)^{\langle \alpha,x\rangle}}}.
    \end{align}
    Thus, the claim is valid for $q(x) = \langle x, \ell x\rangle + \langle \alpha,x\rangle$.
\end{proof}

\subsection{Proof of Proposition~\ref{prop:analysis}}
        \label{apx:proof-prop}
        First assume that $c\neq 0$. If $\ket{s}, \ket{s'} \in \Stab_n$, then $\braket{s}{s'} = i^{\ell}2^{-m/2}$ for integer $m$ and $\ell = 0, 1, 2,3$. Therefore, $\mathcal{I} \setminus \mathcal{B}_r$ has finite element for each $r>0$. Take $r=\frac{\abs{c}}{2}$ and$\epsilon $ to be
        \begin{align}
            \min\pr{r, \min_{x\in \mathcal{I} \setminus (\mathcal{B}_{r}\cup \set{c})} \abs{ x - c}}
        \end{align}
        There exists $n_0$ such that for all $n\geq n_0$, we have $\abs{c_n - c} < \epsilon$. Since $\epsilon \leq r$, we have $\abs{c_n - c} < r = \frac{\abs{c}}{2}$. By triangle inequality, we get that $\abs{c_n} > \frac{\abs{c}}{2}$. Also, from the definition of $\epsilon$, $c_n$ cannot be any point in $\mathcal{I} \setminus (\mathcal{B}_{r}\cup \set{c})$. Therefore, $c_n = c$ for $n \geq n_0$.

        Now suppose that $c=0$. If there are finitely many $n$ such that $c_n =0$, then we can eliminate all of them to obtain a subsequence with non-zero elements. Otherwise, there is a subsequence of zero elements.

%\section{Bell sampling and the proof of Corollary~\ref{cor:main1} and \ref{cor:main2}}
\section{Bell sampling}
\label{sec:bell-sampling}
Here we review the Bell sampling a procedure that can efficiently approximate a quantity closely related to Gowers 3-norm of a quantum state. For $y, \alpha \in \F^n$, let $W_{(y, \alpha)}\eqdef  i^{y\cdot \alpha}X^yZ^\alpha$ be the corresponding Weyl operator. Given two copies of a state $\ket{\phi}$, \emph{Bell sampling} is measuring the states in the Bell basis $\set{\ket{W_{z}} \eqdef W_z\otimes \one \ket{\Phi^+}: z \in \F^{2n}}$ where $\ket{\Phi^+}\eqdef \frac{1}{\sqrt{N}}\sum_{x\in \F^{n}}\ket{x}\ket{x}$ is the maximally entangled state on $2n$ qubits. Given four copies $\ket{\phi}^{\otimes 4}$, \emph{Bell difference sampling} is first Bell sampling on the first two copies and the last two copies to obtain $z_1, z_2 \in \F^{2n}$ and then outputting $z = z_1 + z_2$. By \cite[Eq~(3.1)]{gross2021schur}, if we measure two copies of $\ket{\phi}$ in basis of $W_z$, the output of measurements are the same with probability 
\begin{align}
    \frac{1}{2}\pr{ 1 + \sum_{z\in \F^{2n}}q(z) f(z)}
\end{align}
where $f$ is the characteristic function of $\ket{\phi}$ and $q = f * f$. Therefore, we obtain
\begin{lemma} \cite[Lemma~3.8]{arunachalam2024tolerant}
    We can estimate $\sum_{z\in \F^{2n}}q(z) f(z)$ up to an additive error $\delta$ using $O(1/\delta^2)$ copies and a circuit of size $O(n/\delta^2)$.
\end{lemma}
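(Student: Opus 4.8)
The plan is a textbook empirical-estimation argument resting on the identity displayed just above the lemma. Write $\Theta \eqdef \sum_{z\in\F^{2n}} q(z)f(z)$. That identity, which is \cite[Eq~(3.1)]{gross2021schur}, says that the following experiment consumes a constant number of copies of $\ket\phi$ and accepts with probability $p = \tfrac12(1+\Theta)$: perform Bell difference sampling on four copies of $\ket\phi$ to obtain $z\in\F^{2n}$; then take two further copies of $\ket\phi$, measure each in the $\pm1$-eigenbasis of the Hermitian Weyl operator $W_z$, and accept iff the two outcomes agree. (For fixed $z$ the agreement probability is $\tfrac12\big(1+(\bra\phi W_z\ket\phi)^2\big)=\tfrac12(1+f(z))$, while $z$ is distributed according to $q=f*f$; since $\sum_z q(z)=1$ by \eqref{eq:parseval-f}, averaging over $z$ yields $\tfrac12(1+\Theta)$.) Thus one run produces a bit $B\in\{0,1\}$ with $\E{B}=p$, and $\Theta = 2\E{B}-1$.

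Next I would check that one run is a uniform quantum circuit of size $O(n)$. A Bell measurement on a pair of $n$-qubit registers is a transversal layer of CNOTs, a layer of Hadamards, and a computational-basis readout, i.e.\ $O(n)$ gates; Bell difference sampling is two such rounds plus the classical XOR $z=z_1+z_2$. Since $W_z$ is a Pauli operator up to a phase, with spectrum $\{\pm1\}$, there is an $O(n)$-gate Clifford circuit --- constructible from $z$ in polynomial time --- that measures the $\pm1$ eigenvalue of $W_z$ (e.g.\ conjugate $W_z$ to a single-qubit $Z$ by Hadamards and CNOTs, then read off one bit). Hence one run costs a constant number of copies of $\ket\phi$ and $O(n)$ gates.

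Finally, I would repeat the experiment $m$ times independently, obtaining i.i.d.\ bits $B_1,\dots,B_m$, and output $\widehat\Theta \eqdef \tfrac2m\sum_{i=1}^m B_i - 1$, which is unbiased for $\Theta$. By Hoeffding's inequality $\P{\abs{\widehat\Theta-\Theta}>\delta}\le 2\exp(-\delta^2 m/2)$, so $m=O(1/\delta^2)$ runs make the failure probability at most $1/3$; this uses $O(m)=O(1/\delta^2)$ copies of $\ket\phi$ and a circuit of total size $m\cdot O(n)+O(m)=O(n/\delta^2)$. The only non-routine step is the first one --- confirming that the stated measurement procedure has acceptance probability exactly $\tfrac12(1+\Theta)$, which hinges on Bell difference sampling sampling from $q$ and on the algebra relating $W_z$-eigenbasis agreement to $f(z)$ --- and this is precisely the content of the cited identity; everything afterwards is a routine concentration bound.
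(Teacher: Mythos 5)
Your proposal is correct and matches the approach the paper relies on: the paper itself gives no proof (it cites \cite[Lemma~3.8]{arunachalam2024tolerant}), but the paragraph preceding the lemma describes exactly the experiment you reconstruct --- Bell difference sampling to draw $z\sim q=f*f$, then checking agreement of $W_z$-measurements on two further copies to get a Bernoulli variable with mean $\tfrac12(1+\sum_z q(z)f(z))$ --- after which your Hoeffding bound gives the stated $O(1/\delta^2)$ copy and $O(n/\delta^2)$ gate counts.
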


The quantity $\sum_{z\in \F^{2n}}q(z) f(z)$ above is related to the Gowers 3-norm by \cite[Eq.~(4)]{arunachalam2024tolerant}
\begin{align}
    \norm{\ket{\phi}}_{U^3}^{16} \leq \sum_{z\in \F^{2n}}q(z) f(z) \leq \norm{\ket{\phi}}_{U^3}^{8}
\end{align}

\section{Other relations between measures of stabilizer complexity}
\label{sec:other-rel}
In this section, we provide other relations between measures of stabilizer complexity. In particular, motivated by Theorem~\ref{th:main2}, we are interested on the relation of the following form.
    \begin{definition}
    \label{def:measures-rel}
        Let $\alpha, \beta$ be two measures defined on quantum states (which are arbitrary in general and not necessarily related to stabilizer states). Let us denote 
        \begin{align}
            \alpha^* = \sup_n\sup_{\ket{\phi}\in \C^N: \braket{\phi}{\phi} = 1} \alpha(\ket{\phi})\\
            \beta^* = \sup_n\sup_{\ket{\phi}\in \C^N: \braket{\phi}{\phi} = 1} \beta(\ket{\phi})
        \end{align}
        We say that $\alpha \longrightarrow \beta$ if for any $a < \alpha^*$, there exists a constant $b < \beta^*$ such that if $\alpha(\ket{\phi}) \leq a$, then $\beta(\ket{\phi}) \leq b$.
    \end{definition}
    With this notation, Theorem~\ref{th:main2} is equivalent to saying that $\chi \longrightarrow 1 - F$. In addition to three measures introduced in Subsection~\ref{sec:measures}, let us consider another measure called $T$-gate complexity that is useful for learning quantum state \cite{grewal2024efficientlearningquantumstates}. We set $\kappa(\ket{\phi})$ as the minimum number $k$ such that we can prepare state $\ket{\phi}$ with Clifford gates and  $k$ number of  $T$ gates.
    \begin{theorem}
    \label{th:rels}
        Given the relation defined in Definition~\ref{def:measures-rel}, we have
        \begin{itemize}
            \item $\chi \longrightarrow 1 - F, 1- \norm{\cdot}_{U^3}$
            \item $\kappa \longrightarrow \chi, 1 - F, 1- \norm{\cdot}_{U^3}$
            \item $1- F \longrightarrow  1- \norm{\cdot}_{U^3}$
            \item $  1- \norm{\cdot}_{U^3}  \longrightarrow 1 - F$
        \end{itemize}
        Furthermore, no other relations hold among these four measures (i.e., counter-examples exist for missing arrows). These relations are summarized in Table~\ref{tab:my_label}.
    \end{theorem}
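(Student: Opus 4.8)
The plan is to prove the seven displayed implications and then exhibit two families of states showing that no further arrow can hold. First note that $\longrightarrow$ is transitive --- given $\alpha\longrightarrow\beta$ and $\beta\longrightarrow\gamma$, one chains the two quantifier choices --- and that $\chi^\ast=\kappa^\ast=\infty$ while $(1-F)^\ast=(1-\norm{\cdot}_{U^3})^\ast=1$ (for the latter, Haar states have exponentially small fidelity, hence, by the inequality in the next paragraph, exponentially small Gowers $3$-norm). Three of the seven implications are then immediate: $1-\norm{\cdot}_{U^3}\longrightarrow 1-F$ is a reformulation of Theorem~\ref{th:main1} (if $\norm{\ket\phi}_{U^3}^8\geq\gamma>0$ then $F(\phi)\geq(\gamma^{C_2}/C_1)^2>0$); $\chi\longrightarrow 1-F$ is Theorem~\ref{th:main2}; and $\kappa\longrightarrow\chi$ is the standard magic-state gadgetization bound $\chi(\phi)\leq 2^{\kappa(\phi)}$, which I would recall as follows: a Clifford${}+t$-$T$ preparation of $\ket\phi$ equals, up to normalization, a computational-basis postselection of a Clifford image of $\ket{0^m}\otimes\ket A^{\otimes t}$, and such postselections, Cliffords, and $\ket{0^m}$ tensor factors do not increase stabilizer rank, while $\chi(\ket A^{\otimes t})\leq 2^t$ trivially (see \cite{Bravyi_Gosset_2016}).

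The one estimate to carry out by hand is the fourth base implication, $1-F\longrightarrow 1-\norm{\cdot}_{U^3}$, for which I would prove the clean bound $\norm{\ket\phi}_{U^3}^8\geq F(\phi)^4$ for every state $\ket\phi$. Since both $F$ and $\norm{\cdot}_{U^3}$ are Clifford invariant, I may assume the optimal stabilizer is $\ket{0^n}$, so that writing $\ket\phi=\tfrac1{\sqrt N}\sum_x g(x)\ket x$ gives $\abs{g(0)}^2=N\,F(\phi)$. Since $\Delta_0 g=\abs{g}^2$, Parseval gives $\sum_\alpha f(0,\alpha)=\E[x]{\abs{g(x)}^4}\geq\abs{g(0)}^4/N=N\,F(\phi)^2$, and Cauchy--Schwarz over the $N$ pairs $(0,\alpha)$ gives $\sum_\alpha f(0,\alpha)^2\geq N\,F(\phi)^4$; as $\norm{\ket\phi}_{U^3}^8=\tfrac1N\sum_{z}f(z)^2$ by Eq.~\eqref{eq:gower-char}, keeping only the terms $z=(0,\alpha)$ yields the bound. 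Consequently $1-F(\phi)\leq a<1$ forces $\norm{\ket\phi}_{U^3}\geq(1-a)^{1/2}$ and hence $1-\norm{\ket\phi}_{U^3}\leq 1-(1-a)^{1/2}<1$. The remaining positive arrows $\chi\longrightarrow 1-\norm{\cdot}_{U^3}$, $\kappa\longrightarrow 1-F$ and $\kappa\longrightarrow 1-\norm{\cdot}_{U^3}$ then follow by transitivity.

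For the separations, observe that since $\kappa\longrightarrow\chi$, any family with unbounded $\chi$ automatically has unbounded $\kappa$; hence it suffices to exhibit (i) a family with bounded $\chi$ but unbounded $\kappa$, which rules out $\chi\longrightarrow\kappa$, and (ii) a family with $1-F$ and $1-\norm{\cdot}_{U^3}$ both bounded away from $1$ yet with unbounded $\chi$, which rules out the remaining four missing arrows $1-F\longrightarrow\chi$, $1-F\longrightarrow\kappa$, $1-\norm{\cdot}_{U^3}\longrightarrow\chi$, $1-\norm{\cdot}_{U^3}\longrightarrow\kappa$. For (i) I would take the one-qubit states $\cos\theta\ket 0+\sin\theta\ket 1$ with $\theta$ chosen so that the state is not exactly preparable by any Clifford${}+T$ circuit (such $\theta$ exist because the set of Clifford${}+T$-reachable states is countable): then $\kappa=\infty$, while for $\theta\notin\tfrac\pi4\Z$ the state is none of the six single-qubit stabilizers but lies in $\operatorname{span}\{\ket 0,\ket 1\}$, so $\chi=2$. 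For (ii) I would pick $n$-qubit states $\ket{\psi_n}$ with $\chi(\psi_n)\to\infty$ --- e.g.\ Haar-random states, whose approximate, hence exact, stabilizer rank is $\geq 2^n/\poly(n)$ by \cite{mehraban2023quadratic} --- taken orthogonal to $\ket{0^n}$, and set $\ket{\phi_n}=(\ket{0^n}+\ket{\psi_n})/\sqrt 2$; then $F(\phi_n)\geq\abs{\braket{0^n}{\phi_n}}^2=\tfrac12$, so $\norm{\ket{\phi_n}}_{U^3}^8\geq\tfrac1{16}$ by the bound above, while $\chi(\phi_n)\geq\chi(\psi_n)-1\to\infty$.

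The crux of the argument is the inequality $\norm{\ket\phi}_{U^3}^8\geq F(\phi)^4$; everything else amounts to assembling Theorems~\ref{th:main1}--\ref{th:main2}, the external gadget bound, transitivity of $\longrightarrow$, and the two constructions, with only routine care for the side facts used there (that a generic real rotation of $\ket 0$ is not a single-qubit stabilizer, and that exact stabilizer rank is genuinely unbounded). The resulting seven implications and five separations are exactly what is tabulated in Table~\ref{tab:my_label}.
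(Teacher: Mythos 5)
Your proposal is correct and follows the same skeleton as the paper's proof: transitivity of $\longrightarrow$, Theorems~\ref{th:main1} and~\ref{th:main2} for $1-\norm{\cdot}_{U^3}\longrightarrow 1-F$ and $\chi\longrightarrow 1-F$, the ``each $T$ gate at most doubles the rank'' bound for $\kappa\longrightarrow\chi$, a single-qubit state unreachable by Clifford${}+T$ circuits to kill the arrows into $\kappa$, and a superposition of $\ket{0^n}$ with a high-rank state to kill the arrows into $\chi$ (the paper uses $\frac12\ket{0}+c_n\ket{\phi_n}$, you use $(\ket{0^n}+\ket{\psi_n})/\sqrt2$ --- immaterial). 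The one genuine difference is the remaining base arrow $1-F\longrightarrow 1-\norm{\cdot}_{U^3}$: the paper simply cites \cite{gross2021schur, arunachalam2024tolerant}, whereas you prove the explicit inequality $\norm{\ket{\phi}}_{U^3}^{8}\geq F(\phi)^4$ by Clifford-reducing the optimal stabilizer to $\ket{0^n}$, applying Parseval to $\Delta_0 g=\abs{g}^2$, and Cauchy--Schwarz over $\set{(0,\alpha)}$; I checked this computation and it is correct, and it makes the theorem self-contained (it also cleanly gives $\norm{\ket{\phi_n}}_{U^3}^8\geq 1/16$ for your separating family). One small slip: to justify $(1-\norm{\cdot}_{U^3})^{*}=1$ you invoke ``the inequality in the next paragraph,'' but $\norm{\cdot}_{U^3}^8\geq F^4$ is a lower bound on the Gowers norm and cannot show it gets small; the correct tool is Theorem~\ref{th:main1} in contrapositive form (small $F$ forces small $\norm{\cdot}_{U^3}$), which you already have on hand, so this is a misattribution rather than a gap.
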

    \begin{table}[]
        \centering
        \begin{tabular}{|c|c|c|c|c|}
        \hline
              & $~~\chi~~$ & $~~\kappa~~$ & $1-F$ & $1-\norm{\cdot}_{U^3}$ \\
                      \hline

            $\chi$ & & \cellcolor{red}& \cellcolor{green} &\cellcolor{green}\\
                    \hline

            $\kappa$ &\cellcolor{green} & & \cellcolor{green}&\cellcolor{green}\\
                    \hline

            $1- F$ &  \cellcolor{red}& \cellcolor{red}& & \cellcolor{green}\\
                    \hline

            $1-\norm{\cdot}_{U^3}$ &\cellcolor{red} &\cellcolor{red} & \cellcolor{green}&\\        \hline

        \end{tabular}
        \caption{Summarizing the results of Theorem~\ref{th:rels}: The color of the cell corresponding to row $\alpha$ and column $\beta$ is green if $\alpha \longrightarrow \beta$ and is red otherwise (i.e., counter-examples exist). }
        \label{tab:my_label}
    \end{table}
    \begin{proof}
        First note that $\longrightarrow$ is transitive, i.e., $\alpha \longrightarrow \beta$ and $\beta \longrightarrow \gamma$ implies $\alpha \longrightarrow \gamma$. $1- F \longrightarrow  1- \norm{\cdot}_{U^3}$ was proven in \cite{gross2021schur, arunachalam2024tolerant}. Our Theorem~\ref{th:main1} implies also $1- F \longrightarrow  1- \norm{\cdot}_{U^3}$. Therefore, using transitivity, stabilizer fidelity and Gowers 3-norm are equivalent and from here we only consider stabilizer fidelity. Our Theorem~\ref{th:main2} also implies that $\chi \longrightarrow 1 - F$. Furthermore, \cite[Lemma~2.5]{grewal2022low} implies that $\kappa \longrightarrow 1- F$. Also, because each $T$ gate can at most double the stabilizer rank we have $\kappa \longrightarrow \chi$. Moreover, for one qubit states, we have states that we cannot prepare with any number of $T$ gates (as the set of states that can be prepared is countable). On the other hand, for a one qubit state, the stabilizer rank is at most $2$ and stabilizer fidelity is at least $1/2$. This shows that $\chi \longrightarrow \kappa$ and $1-F\longrightarrow \kappa$ do not hold. Finally, take a sequence of states $\ket{\phi_n}$ whose stabilizer rank goes to infinity (this sequence exists by \cite{mehraban2023quadratic}). Now take $\ket{\psi_n} = \frac{1}{2}\ket{0} +c_n\ket{\phi_n}$ for appropriately chosen $c_n$. We have $\chi(\ket{\psi_n}) \geq \chi(\ket{\phi_n}) - 1$, so stabilizer rank of  $\ket{\psi_n}$ goes to infinity. Furthermore, their stabilizer fidelity is at least $1/4$ by looking at their overlap with $\ket{0}$. This shows that $1-F\longrightarrow \chi$ does not hold.
    \end{proof}

    \bibliographystyle{alphaurl}
\newcommand{\etalchar}[1]{$^{#1}$}
\end{document}